\documentclass[twocolumn,pre,superscriptaddress]{revtex4-2}
\usepackage[utf8]{inputenc}
\setcounter{secnumdepth}{3}
\setcounter{tocdepth}{2}
\usepackage{xcolor}
\usepackage{babel,comment}
\usepackage{dsfont}
\usepackage{amscd}
\usepackage{amsmath}
\usepackage{amsthm}
\usepackage{amssymb}
\usepackage{bm}
\usepackage{units}
\usepackage{cancel}
\usepackage{graphicx}
\usepackage{upgreek}

\usepackage[colorlinks=true]{hyperref}

\newtheorem{theorem}{Theorem}

\newtheorem{lemma}[theorem]{Lemma}
\newtheorem{proposition}[theorem]{Proposition}
\theoremstyle{remark}
\newtheorem{remark}{Remark}
\theoremstyle{definition}
\newtheorem{definition}{Definition}
\newtheorem{example}{Example}

\begin{document}
\title{%
Covariant influences for finite discrete dynamical systems}
\author{Carlo Maria Scandolo}
\email{carlomaria.scandolo@ucalgary.ca}
\affiliation{Department of Mathematics \& Statistics, University of Calgary, Calgary,
AB,  T2N 1N4, Canada}
\affiliation{%
Institute for Quantum Science and Technology, University of Calgary,
Calgary, AB, T2N 1N4, Canada}
\author{Gilad Gour}
\email{gour@ucalgary.ca}
\affiliation{Department of Mathematics \& Statistics, University of Calgary, Calgary,
AB,  T2N 1N4, Canada}
\affiliation{%
Institute for Quantum Science and Technology, University of Calgary,
Calgary, AB,  T2N 1N4, Canada}
\author{Barry C.\ Sanders}
\email{sandersb@ucalgary.ca}
\affiliation{%
Institute for Quantum Science and Technology, University of Calgary,
Calgary, AB,  T2N 1N4, Canada}
\begin{abstract}
We develop a rigorous theory of external influences on finite discrete dynamical systems, going beyond the perturbation paradigm, in that the external influence need not be a small contribution.
Indeed, the covariance condition can be stated as follows: if we evolve the dynamical system for $n$ time steps and then we disturb it, it is the same as first disturbing the system with the same influence and then letting the system evolve for $n$ time steps.
Applying the powerful machinery of resource theories, we develop a theory of covariant influences both when there is a purely deterministic evolution and when randomness is involved.
Subsequently, we provide necessary and sufficient conditions for the transition between states under deterministic covariant influences and necessary conditions in the presence of stochastic covariant influences, predicting which transitions between states are forbidden.
Our approach, for the first time, employs the framework of resource theories, borrowed from quantum information theory, to the study of  finite discrete dynamical systems.
The laws we articulate unify the behavior of different types of  finite discrete dynamical systems, and their mathematical flavor makes them rigorous and checkable.
\end{abstract}
\maketitle
\tableofcontents
\section{Introduction}
\label{sec:intro}
Dynamical systems describe the evolution of several interesting situations.  There are essentially two flavors: continuous dynamical systems, which are more often studied, and discrete dynamical systems, which are the subject here.
In many cases, especially for continuous dynamical systems, when the evolution is particularly complex to deal with, one splits the evolution into two parts: the uninfluenced part and the perturbation.
The latter is interpreted as a small correction to the uninfluenced evolution. In this article, we go beyond the perturbation paradigm by introducing the notion of \emph{covariant influence}, which need not be a small contribution. Such a splitting can be done on the basis of time scale: we can think of the covariant influence as acting on a significantly longer time scale than the uninfluenced evolution.
A small influence is a perturbation, so to get a more interesting behavior we must also go beyond small  influences. The covariance condition guarantees that, despite not being small, it preserves the underlying structure representing the evolution of states. In this way, the evolution of a discrete dynamical system can  be written as the composition of two evolutions: one is understood as the basic evolution of the system, and the other is the covariant influence.
The covariance requirement ensures that the order in which these two parts are applied does not matter.

In this article, for the first time, we develop a general theory of covariant influences in discrete dynamical systems with a finite number of states. Such a theory comes in two flavors. The first is a deterministic one, where randomness is completely forbidden both in the initial state and in the action of the covariant influence. In the second, instead, we allow the presence of randomness both in the initial state and in the covariant influence. We show that  deterministic influences allow hopping between attractors whose length becomes smaller and smaller. Instead, in the presence of randomness all jumps between attractors become possible. In particular, we achieve a full characterisation of transitions between states in the deterministic setting, and in the random case, we predict which transitions between states are forbidden.

Our approach, for the first time, employs the framework of resource theories, borrowed from quantum information theory, to the study of discrete dynamical systems. This also constitutes the first application of resource theories outside the physics domain, to a field with countless applications to diverse areas of science, including genetic regulatory networks. The advantage of the resource-theoretic approach is the separation of the evolution of a discrete dynamical system into an ``uninfluenced'' contribution and an influence, as is standard practice when studying perturbations. However, the influence need not be a perturbative (i.e.\ small) contribution~\cite{Arnold,Fasano-Marmi}.

Our results for discrete dynamical systems are obtained from first principles that are of a mathematical flavor, as they are grounded in category theory. This ensures that our analysis is rigorous, logical, and checkable. In particular, the use of resource theories allows us to get a unified picture of discrete dynamical systems under influences, regardless of the specifics of their evolutions, unlike most standard approaches to discrete dynamical systems. In this way, our results, in the form of simple mathematical laws, can be phrased in general terms, so they are applicable to a broad class of discrete dynamical systems. The key concept in our analysis is covariance, which can be thought of as a symmetry in time evolution, and can be expressed by a simple commutativity condition.
The general laws we find are predictive for discrete dynamical systems where covariance is in force, and motivate novel experimental work.

The article is organized as follows: in~\S\ref{sec:background}
we present the background of this work, starting with a presentation of discrete dynamical systems, and ending with a presentation of resource theories.
In~\S\ref{sec:approach}, we introduce the covariance condition, showing that it gives rise to a well-behaved resource theory. In~\S\ref{sec:results}, we present our results both for deterministic dynamical systems and dynamical systems with randomness. A summary of the results is presented in~\S\ref{sec:discussion}, and conclusions are drawn in~\S\ref{sec:conclusions}.
\section{Background}
\label{sec:background}
In this section we provide key background elements that underpin our contributions and advances in subsequent sections.
We begin in~\S\ref{subsec:DDS} with an overview of a discrete dynamical system,
which is a set of elements that evolve discretely in time.
Then we present random Boolean networks as a special case of discrete dynamical systems in~\S\ref{subsec:Random-Boolean-networks}.
The subsequent subsection~\ref{subsec:resource} provides essential elements of resource theories.
\subsection{Dynamical systems}
\label{subsec:DDS}

In this subsection we elaborate on essential background concerning discrete dynamical systems.
We begin with explaining general notions of discrete dynamical systems including the set of states and the mapping from states to states corresponding to discrete evolution.
Then we discuss features of discrete dynamical systems such as attractors and basins of attraction,
which arise in studies of dynamics.
A discrete dynamical system can be deterministic or stochastic,
with stochasticity in this case pertaining to states as the dynamics is always deterministic.
\subsubsection{Introduction}
\label{subsubsec:dynamicalnetwork}

Now we introduce the basics of discrete dynamical systems.
Specifically,
we define dynamical systems formally.
Then we present a graphical representation of the dynamical system in the form of a graph.

A dynamical system is a pair $\left(S,\phi_t\right)$ comprising a set~$S$ of elements, with each element called a state~$s$ of the dynamical system,
and a family of functions~$\left\{\phi_{t}\right\}$,
which maps~$S$ to itself, and describe the evolution at time $t$.
If the label~$t$ is in $\mathbb{N}$ or $\mathbb{Z}$, the dynamical system is called \emph{discrete}; if~$t$ is in $\mathbb{R}$, the dynamical system is called \emph{continuous}. The standard requirements for such a family of functions~$\left\{\phi_{t}\right\}$ are the following:
\begin{enumerate}
    \item $\phi_0=\mathds1$, where $\mathds1$ denotes the identity on $S$;
    \item $\phi_{t+s}=\phi_t \circ \phi_s = \phi_s \circ \phi_t$;
    \item $\phi_{-t}=\phi_t^{-1}$.
\end{enumerate}
In this article, we focus on discrete dynamical systems, where time runs in $\mathbb{N}$, and the number of states is finite. Therefore, in this work, whenever we talk about discrete dynamical systems, we always mean \emph{finite} discrete dynamical systems.
For discrete dynamical systems,
we  have discrete time steps, and we define $\phi$ to be the function on~$s$ describing the evolution of~$s$ to itself in  a single time step.
With this in mind, we have
\begin{equation}
\label{eq:phin}
\phi_n:=\phi^n,
\end{equation}
where
$\phi^n$ is the composition of~$\phi$ with itself $n\in\mathbb{N}$ times and refers to the evolution of the system over $n$ time steps: the evolution of a state $s$
after $n$ time-steps is given by 
$\phi^n\left(s\right)$. For this reason we say that $\phi$ is the \emph{generator} of
the dynamics, and in this case we denote the dynamical system simply by $\left(S,\phi\right)$.
By convention, $\phi^0=\mathds1$. From the second of the conditions above, note that evolving the system for $n$ time steps, and then for other~$m$ time steps,
is the same as evolving it for $n+m$ time steps.
Therefore, the set $\left\{\phi^n\right\}_{n\in\mathbb{N}}$ of the powers of $\phi$~(\ref{eq:phin})
has the algebraic structure of a commutative \emph{monoid}, but, in
general, \emph{not} of a group
because monoids are more general structures that include groups, which allow non-invertible evolution.

In many dynamical systems there are external influences that can change the evolution of the system. Think, e.g., of perturbation theory, where we view the perturbation as a small correction to  the evolution of the system~\cite{Arnold,Thirring,Fasano-Marmi}. If we remove the requirement of the perturbation to be small, we have an influence. In mathematical terms, 
an \emph{influence} is simply a mapping 
\begin{equation}
\label{eq:fStoS}
f:S\to S.
\end{equation}
Influences 
form a monoid under function composition (the identity is trivially an influence),
but it is different from the dynamical monoid because it is not generated by a single influence, and as such it is not commutative.
Note that we take $f$ to be in a monoid and not a group because we want to allow  general behavior.
A dynamical graph is a representation of a dynamical system.
A directed graph~$G$ comprises an ordered set of~$M$ vertices~$S$ (also ``nodes''),
and directed edges~$E$,
which are ordered pairs~$\left(s,s'\right)\in S\times S$. The dynamical graph
represents the dynamics corresponding to the evolution $\phi$: 
there is a directed edge from a vertex
$s$ to a vertex~$s'$ if $s'=\phi\left(s\right)$. 
As the dynamics are deterministic, there is only one outgoing edge
for each vertex of the dynamical graph. An example of a dynamical graph is depicted in Fig.~\ref{fig:dynamicalgraph}.
\begin{figure}
\includegraphics[width=\columnwidth]{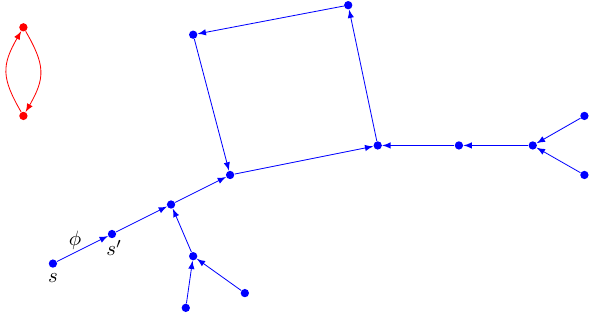}
\caption{%
Dynamical graph:
vertices are states of the dynamical system.
There is an arrow from a state~$s$ to a state $s'$ if $s'=\phi\left(s\right)$. Notice the presence of a cycle of length 2 on the left and a cycle of order 4 on the right. The two connected components of the dynamical graph correspond to the two basins of attraction. The red basin is the set of all states that end up in the cycle of length 2, and the blue basin is the set of all states that end up in the cycle of length 4. Notice that the two basins of attraction are disjoint.}

\label{fig:dynamicalgraph}
\end{figure}

\subsubsection{Features}
\label{subsubsec:features}

We now describe features including cycles, attractors,
transient states and basins of attractions,
which are the fundamental features of a dynamical system.
First we formally define each of these three features.
Then we explain how these features are represented in the dynamical graph,
which represents the dynamical system.

As the dynamics is deterministic on a \emph{finite}
number of states, at some point the dynamics must repeat itself.
Therefore, there is at least one \emph{cycle}.
More formally, a cycle $C$ is a set of states $\{s\}$ such that there exists a positive integer~$\ell$ such that, for each $s\in C$, we have  $\phi^{\ell}\left(s\right)=s$. The minimum integer~$\ell$ is called the \emph{period} or \emph{length} of the cycle. Note that the length of the cycle is also the number of states in the cycle. 
In general, in a dynamical system there are a number of cycles of different periods. Those cycles of period
1 are called \emph{fixed points}. In this case,
\begin{equation}\label{eq:fixed point}
    \phi\left(s\right)=s.
\end{equation}
Cycles are loops in the dynamical graph such that the loop is oriented by directed edges pointing in one way.

An attractor is a set of states that is forward-invariant under the evolution $\phi$. In other words, it is a set of states where evolution ``repeats itself''. As we are working with a finite dynamical system, attractors are actually cycles.  Thus,  all states end up in some cycle.
Some states do not belong to any
cycles; they are called \emph{transient} because it is not possible
to come back to them, not even after enough time steps.

In the dynamical graph, transient states are the vertices of the non-cyclic parts of the graph,
i.e., everything that is not in a loop.
Given an attractor, its \emph{basin of attraction} is the set of all states that end up in that attractor after enough time steps. Notice that, since the dynamics
is deterministic, basins of attractions relative to different attractors are disjoint, i.e., there
can be no state belonging to more than one basin of attraction. In
other words, the set of states~$s$ is partitioned into basins of attraction.
The dynamical graph is made of several connected components: each of them
represents a basin of attraction.
With a little abuse of terminology, we  often say that a state,
even a transient one, has length~$\ell$ if~$\ell$ is the length
of the attractor in its basin of attraction
with the notion of basins made clear in Fig.~\ref{fig:dynamicalgraph}.

\subsubsection{Introducing randomness}
\label{subsubsec:introducingrandomness}

Now we explain how randomness is introduced into the dynamical system.
In some situations, the influence is best described as a stochastic process that activates jumps between the states of a dynamical system according to a certain probability distribution. In such a stochastic setting, we are no longer sure about the state of the dynamical system, which demands the introduction of randomness in the sense of a probability distribution over the states of the dynamical system. Despite randomness, evolution is still represented as a deterministic linear map over the set of state probability distributions.

We now represent randomness on states~$s\in S$
by constructing the $\sigma$-algebra of the power set~$\wp(S)$.
In this case, 
 we first fix an ordering on the elements of~$s$.
 To represent probability measures, it is enough to consider probability
vectors $\bm{p}\in\mathbf{P}^M(\mathbb{R})$, whose entries represent the probability assigned to each of the~$M$ singleton subsets of $S$, each corresponding to one of the states.
Specifically, each canonical-basis vector $\bm{e}_i$ represents the dynamical state 
$s_i$
(namely the $i$th state in the ordering)
because 
$\bm{e}_i$ as a probability vector represents the certainty of being in the state $s_i$. 

Each probability vector~$\bm{p}$ can be visualized in the dynamical graph as follows: each entry of~$\bm{p}$ represents the probability of the corresponding vertex of the dynamical graph. In other words, we are labeling the vertices of the dynamical graph with the entries of $\bm{p}$,
i.e., vertex~$s_i$ is labeled~$p_i$,
which we imagine as a continuous greyscale coloring of vertices in the graph.

The functions $\left\{\phi^n\right\}_{n\in\mathbb{N}}$ describing evolution in time can be
extended by linearity
\begin{align}\label{eq:linearized phi}
    \phi^n:&\,\mathbf{P}^M(\mathbb{R})\to\mathbf{P}^M(\mathbb{R})\nonumber\\
   :&\, \bm{e}_i\mapsto\bm{e}_j
\end{align}
such that $\phi^n\left(s_i\right)=s_j$, where $s_i$ is the state associated with the vector $\bm{e}_i$,
for every $n\in\mathbb{N}$. With a little abuse of notation, we also write the mapping in Eq.~\eqref{eq:linearized phi} as $\bm{e}_i \mapsto\bm{e}_{\phi\left(i\right)}$. 
As the generator of the dynamics~$\phi$ is a linear map, it can be
represented by a matrix
$\Phi$, called \emph{dynamical matrix}, once we fix the canonical basis in the domain and the codomain of $\phi$: its columns are
given by the action of $\phi$ on the canonical basis. Note that~$\Phi$ can be viewed as a column-stochastic matrix: every entry is non-negative and columns sum to 1 (there is only one 1 per column). In this spirit, $\Phi_{ij}$
is the probability for state~$s_j$
to transition to state~$s_i$.
According to graph theory, the matrix~$\Phi$
is nothing but the transpose of the adjacency matrix of the dynamical graph.

\subsection{Random Boolean networks}
\label{subsec:Random-Boolean-networks}

Random Boolean networks, introduced in 1969 by Kauffman~\citep{Kauffman1,Kauffman2},
were the first successful theoretical model to explain genetic regulation in cells.
They have been the privileged model to understand the expression of
genes for fifty years~\citep{Kauffman-review}, attracting the attention
of biologists and physicists working on complex systems as well~\citep{Gershenson-review,Drossel-review}.
A random Boolean network can be described as a discrete dynamical system; as such, it can be represented  as a dynamical graph.
\subsubsection{Introduction}
\label{subsubsec:booleannetwork}

We now discuss underlying concepts of random Boolean networks.
First we explain how the network represents which genes are expressed and which are not expressed at a given time.
Then we elaborate on how Boolean functions describe evolution from one configuration of gene expression to another.
Finally, we explain what is random per se in a random Boolean network.

Random Boolean networks are directed graphs, 
whose~$N$ vertices represent genes,
and directed edges represent the mutual influence of genes on the
expression of other genes.

Each vertex can be labeled black or white,
representing an expressed or an unexpressed gene, respectively.
We can represent the state of all
genes, i.e.\ of the whole network,
as a Boolean string
\begin{equation}
g=g_0g_1g_2\cdots g_{N-1}\in\{0,1\}^N
\end{equation}
(once we fix a total ordering of the vertices)
such that~$g_i$
represents expression or not of the $i$th gene, whether suppressed
($g_i=0$) or expressed ($g_i=1$). An example of such a network is presented in Fig.~\ref{fig:RBN}.

\begin{figure}
\includegraphics[width=\columnwidth]{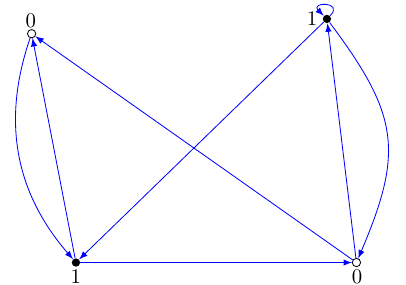}
\caption{%
This random Boolean network contains 4 genes (represented by the vertices of the graph). White vertices represent unexpressed genes; black vertices represent expressed ones. Directed edges represent which genes influence one another: the gene at the tip of an edge is influenced by the gene at the tail of the edge.}
\label{fig:RBN}
\end{figure}

The state of each gene~$g_i$ evolves to the next time step according to the states of the~$K_i$ parent genes in the previous time step,
i.e.\ graphically, those genes from which  
a directed edge points to the $i$th gene.
Mathematically, this random Boolean network describes
evolution of the state~$g$ of the whole network according to~$N$ Boolean functions
\begin{equation}
f_i:\left\{ 0,1\right\}^{K_i}\rightarrow\left\{ 0,1\right\}
\end{equation}
on the $i$th vertex with the output determining the black-or-white label for whether gene~$g_i$ is expressed or not
(black or white, respectively)
at the next time step.
All Boolean functions~$\{f_i\}$
 stay the same at every time step. 
In this setting, the expression
of a gene at each time step depends on which genes are expressed at the
previous step.

The vertices of the directed graph, representing genes, are fixed. Instead, randomness is only initial, and comes in three ways:
\begin{enumerate}
    \item randomness in which vertices are labeled black or white at the initial step;
    \item randomness in the choice of directed edges between the vertices;
    \item randomness in the choice of Boolean function associated with each vertex.
\end{enumerate}
Once these features are chosen, they stay the same for all time steps. In practice, the evolution can still be regarded as deterministic. For each of these three forms of randomness, various prior probability distributions have been proposed~\cite{Drossel-review}.  To represent the randomness in the choice of the initial state $g$ of the whole network, we resort to probability vectors, as discussed above. Here we have~$M=2^N$ states~$\{g\}$ of the whole network. Therefore, randomness in the initial state is described by a probability vector $\bm{p}\in \mathbf{P}^{2^N}(\mathbb{R})$. As the evolution can still be regarded as deterministic, for the scope of this article, we do not need a precise mathematical treatment of the other two types of randomness.

\subsubsection{Random Boolean networks as discrete dynamical systems}
\label{subsubsec:booleannetworkDDS}

Now we explain how random Boolean networks can be viewed as discrete dynamical systems. First, we show how to represent time evolution, and then how to construct the associated dynamical graph.

The evolution of a random Boolean network arising from the edges of the graph, and the choice of Boolean functions can be represented in a more high-level fashion. Take the set of Boolean strings $\{g\}$, which, as we saw above, represent the states of the network.
The action on the Boolean functions of the network makes a bit string
\begin{equation}
g=g_0g_1g_2\dots g_{N-1}
\end{equation}
evolve to a new bit string
\begin{equation}
g'=g'_0g'_1g'_2\dots g'_{N-1},
\end{equation}
where~$g'_i$ denotes the evolution of~$g_i$ at the next time step under the action of~$f_i$.
We can represent the evolution of $g$ into $g'$ in a single time step as a function $\phi:\left\{ g\right\} \rightarrow \left\{g\right\}$ on the set of bit strings of $N$ elements. Then the pair $\left(\{g\},\phi^n\right)$ is a discrete dynamical system.

Given that a random Boolean network can be described as a dynamical system, one can construct the associated dynamical graph, which is also known as the ``state space'' in the literature on random Boolean networks~\cite{Gershenson-review, Drossel-review}.
The rules to construct it are the same as presented for generic discrete dynamical systems: vertices are states of the network, and we use $\phi$ to draw directed edges.

\subsubsection{Network of genes and mutual interaction}
\label{subsubsec:GRNs}

Now we present the biological significance of random Boolean networks, explaining how they have been studied in the literature. Specifically, different phases have been identified as far as the dynamics in the ``thermodynamic limit'' are concerned.
In this setting, different attractors of the associated dynamical system are linked to different types of cells. Randomness arises because of the complexity in describing the noisy environment in which cells live.

Features of the discrete dynamical system associated with random Boolean network have a biological counterpart. For instance, it is argued that each attractor represents a different
\emph{cell type}~\citep{Kauffman1,Kauffman2,Kauffman-review}. This has also been verified experimentally~\cite{attractor-experiment}. 

Most of the analysis has been carried out in the so-called
``thermodynamic limit'', where one considers an ``infinite'' network,
mimicking the very large number of genes in the human DNA, and studying
the dynamics arising on it. In this scenario, a phase transition has
been observed, separating an ``ordered phase'', where the dynamics
are described by short and simple cycles, from a ``chaotic phase'',
where the dynamics are described by complex and long cycles. 

From the point of view of the adaptation of a cell to the environment,
the ordered phase is too static and rigid, and does not provide any
adaptive power to cells. On the other hand, the chaotic phase is too
unstable to guarantee a successful adaptation to the environment.
Therefore, Kauffman identified the ``golden phase'' to explain the
origin of the various cell types in the \emph{critical regime}, separating
the ordered phase from the chaotic one~\citep{Kauffman1,Kauffman2}.
According to the model of genetic regulation described by random Boolean
networks, the number of attractors in the critical regime follows
a power law: $N^{0.63}$, where $N$ is the number of genes~\citep{vanNimwegen2006,Kauffman-review}.
Empirical data show instead a power scaling given by $N^{0.88}$~\citep{Niklas2014,Kauffman-review}.
This discrepancy in the predictions has led some to question the actual
value of random Boolean networks as a model for genetic regulation
\citep{Drossel-review};
nevertheless they still remain a strong and
powerful model, and in any case, are interesting from the point of
view of the physics of complex systems.

In the study of random Boolean networks, there is the \emph{implicit} presence of an external environment, defined as everything that is outside the cell under consideration.
We assume that the environment is computationally hard to describe;
therefore, it is conveniently represented by the introduction of randomness in the form of probability vectors, as described above~\cite{Kauffman-environment,Gershenson-review,Drossel-review}. 

\subsection{The framework of resource theories}
\label{subsec:resource}
In this subsection we introduce resource theories, which are a powerful mathematical framework for dealing with the notion of ``resources'' in a rigorous fashion, developed in quantum information~\cite{Quantum-resource-1,Quantum-resource-2,Gour-review,Resource-channels-1,Resource-channels-2,Gour-Winter} and beyond~\cite{Resource-theories,Resoure-monotones,Chiribella-Scandolo-entanglement,Fong,Purity,ScandoloPhD,Nicole,Takagi-Regula,Selby2020compositional}.
 We start by explaining their relevance to quantum theory, and then we continue by formulating them in a theory-independent way, using process theories. This  allows us to extend their scope to discrete dynamical systems in the rest of this article.
\subsubsection{Motivation}
\label{subsubsec:motivation}

Resources theories have been remarkably successful in quantum information. providing new insights into the manipulation of quantum resources, such as quantum entanglement. The notion of a resource theory is based on the fact that, in several cases, we cannot implement all possible operations because of some constraints. The processes that respect the constraints are called ``free'' because they do not require any resources to occur.

Resource theories were developed to manage resources of quantum information optimally, in particular quantum entanglement~\cite{Review-entanglement,Plenio-review}. However, they have been extended to several other quantum information areas, such as quantum thermodynamics~\cite{delRio,Lostaglio-thermo}, the study of quantum coherence~\cite{Review-coherence} and quantum reference frames~\cite{Reference-frames-review,Gour-asymmetry1,Gour-asymmetry2,Marvian-asymmetry,Noether}, and quantum computing~\cite{Veitch_2014,Magic-states}. In these areas they have produced many important results, making a huge contribution to the development of new research areas.

In many cases, constraints restrict what processes can occur, in the sense that breaking such constraints is possible, but requires additional resources. For instance, in a thermodynamic setting a system is often immersed in an environment at a certain temperature. In this situation, the system reaches thermal equilibrium spontaneously, according to the ``minus first law'' of thermodynamics~\cite{BU01}.
This law restricts what thermodynamic processes can occur: they can only bring a state closer and closer to the thermal equilibrium with the environment. This can be argued to be the origin of the time-asymmetry in thermodynamics~\cite{BU01}. All the processes that go against the minus first law are a resource because they allow us to restore part of the time symmetry, and indeed they require some external work to be implemented. For this reason, whenever there is a restriction, the allowed processes are called ``free operations'', and those that enable us to overcome the restriction are called ``non-free''. Among processes, a particular type of them are those corresponding to the preparation or initialization of a system, which are identified with the states of the system itself. Again, free states correspond to preparations that can be done in the presence of the restriction, whereas non-free states require some resource to be implemented.

The goal of a resource theory is
to describe which transitions between the states of the theory are
possible in the presence of the restriction, namely under free operations.
This is known as the ``conversion problem''.

\subsubsection{Process theories}
\label{subsubsec:processtheories}

Now we introduce process theories~\citep{Coecke2011,Coecke2017picturing,Coecke-Kissinger}, which are a mathematical framework where the notion of process plays center stage. In these theories, one can describe systems as well as their evolution and interaction through processes. Their abstract formalisation is based on the notion of \emph{strict symmetric monoidal category}~\citep{baez_lauda_2011,Categories}. Process theories provide the necessary background to formulate resource theories outside the quantum setting.

In a process theory, systems are treated as labels
to identify different inputs/outputs. Systems can be composed to yield a composite system. At this stage, it is useful to introduce a particular type $I$, called the \emph{trivial system}, which represents the lack of a system. Clearly composing any
system~$a$ with $I$ yields~$A$ itself. Example of systems are chemical species, in chemistry, or Hilbert spaces arising in quantum physics.

As the name suggests, the core of a process theory are processes. Intuitively, a process is anything that happens between systems that has zero or more inputs and
zero or more outputs~\citep{Coecke-Kissinger}. The canonical example
is a function, which has one or more inputs, and has an output. However,
we can find more physical examples, such as chemical reactions, physical transformations.
Therefore, we represent a process as $f:A\rightarrow B$, where~$a$ is the input
type, and $B$ is the output type. When they share a type, two processes
$f:A\rightarrow B$ and $g:B\rightarrow C$, like functions, can be
composed sequentially to obtain a new process $g\circ f$. This means feeding the output of~$f$ into
the input of $g$. It is natural to require such a composition to
be associative, and to have an identity---the \emph{identity process}---which
corresponds to ``doing nothing''. This construction makes a process
theory a \emph{category }\citep{Categories}, with systems as objects
and processes as morphisms. When we have a composite type, $AB$ we can consider processes
that run ``independently'' or in parallel on~$a$ and $B$, i.e.\ two processes
$f:A\rightarrow C$ and $g:B\rightarrow D$ that occur ``at the same
time'' taking~$a$ and $B$ as input, respectively.
Clearly, these
two processes occurring together must yield a valid process on the
composite system $AB$. This new process, called the parallel composition of~$f$ and $g$, is denoted
by
\begin{equation}
f\otimes g:AB\rightarrow CD
\end{equation}
It is natural to assume some properties of the parallel composition that make it interact nicely with the sequential composition~\cite{Coecke-Kissinger}.

So far, we have not touched on the issue of the order of composition of systems, namely whether $AB=BA$. Intuitively,
we do not expect these two composite systems to be very different.
However, there in some cases, there is a conceptual difference. For example, 
consider a flask divided into two parts, containing two different
gases~$a$ and $B$, one in each part. $AB$ can stand for
``gas~$a$ is in part 1 and gas $B$ is in part 2''; therefore, $BA$
stands for ``gas $B$ is in part 1 and gas~$a$ is in part 2''.
Although these two settings are equivalent from the point of view
of the chemical species involved, we cannot say they are exactly the
same setting. Therefore, we can say that they are ``equivalent''.
To make this notion of equivalence mathematically precise, we introduce
the notion of \emph{swap}. A swap allows us to swap the order of two
systems in parallel composition.

The properties just seen make a process theory a \emph{strict symmetric
monoidal category}~\citep{Categories}. In this setting, we can view \emph{states} as a particular kind of a
process: a process with no input, corresponding to the preparation of a system.
Mathematically, they are processes with the trivial system as input; e.g.\ $s:I\rightarrow A$
will be a state of $A$. Therefore, in general, 
to understand what the states of a process theory are, we need to
identify what the trivial system is, i.e.\ the identity of system composition.

\subsubsection{Resource theories}
\label{subsubsec:resourcetheories}

From a formal point of view, a resource theory arises from a process theory when there is a restriction on the processes.
In this setting, the conversion problem becomes relevant, which means determining whether, given two states, we can convert one into the other. The answer can be given with the aid of particular functions of the two states, called conversion witnesses. 
Finally, we present the example of the resource theory of asymmetry, which provides the mathematical inspiration for the approach in this article.

A resource theory is a process theory augmented by partitioning processes into ``free'' and ``non-free'' processes such that composing free processes and swapping systems are free as well as the identity process.
These requirements make a resource theory equivalent to specifying a strict
symmetric monoidal subcategory of the process theory. The morphisms
in the subcategory are exactly the free operations of the resource
theory. In particular, the states in this subcategory are the free states of the resource theory. 

We can set up a hierarchy
among the states of a theory, whereby a state is more valuable than
another if,
from the former,
it is possible to reach a larger set of
states using only free processes~\citep{Resource-theories,Gour-review}. More formally, we can set up a partial preorder, known as the \emph{resource preorder},
defined as $s' \precsim s$,
if there exists a free process~$f$ such that $s'=fs$, where the product denotes sequential composition of the processes~$f$ and~$s$.
In this setting, we want to map the set of states with its partial preorder into a better known or simpler partially ordered set $\left(X,\leq\right)$~\cite{Fong}. To this end, we need to introduce a function~$M$ from the set of states to a partially ordered set~$X$ such that $M\left(s'\right)\leq M\left(s\right)$ whenever $s' \precsim s$.
Such a function is called a \emph{monotone}. Normally, the condition $M\left(s'\right)\leq M\left(s\right)$ is only a necessary condition for the conversion of~$s$ into~$s'$. However, in some special cases, combining possibly more than one monotone (even involving different partially ordered sets, as we show in~\S\ref{subsec:conversion} even involving the divisibility partial order, denoted by $\mid$, can give us necessary and sufficient conditions for the convertibility under free operations. In this case, we can solve the conversion problem completely, and such a set of monotones is called a \emph{complete set}.

A particularly successful example of a resource theory in quantum information is the resource theory of 
of asymmetry~\citep{Gour-asymmetry1,Gour-asymmetry2,Marvian-asymmetry}. A symmetry of a system corresponds to it being unchanged under certain actions, such as time reversal, spacial reflections. A quantum system is associated with a Hilbert space, and its symmetries are represented by unitary or anti-unitary operators on it. In the resource theory of asymmetry, for every quantum system, there is a group representing its symmetry, whose action is given by unitary operators up to a global phase. Quantum states are described by trace-class positive semi-definite operators with unit trace. Quantum processes are completely positive and trace non-increasing linear maps on the set of self-adjoint operators on a Hilbert space. As seen above, free operations arise from some restriction. In this case it is a \emph{covariance condition}: $\mathcal{E}$ is free if $\mathcal{U}_g \circ \mathcal{E} =  \mathcal{E} \circ \mathcal{U}_g$, for every unitary operation $\mathcal{U}_g$ associated with the symmetry (in the context of group representation theory, such maps are known as equivariant maps). Thus, acting with the symmetry before and after the action of $\mathcal{E}$ is the same. This fact can be illustrated with a commutative diagram.
\begin{equation}
\begin{CD}\mathcal{B}\left(\mathcal{H}\right)@>\mathcal{E}>>\mathcal{B}\left(\mathcal{H}\right)\\
@V\mathcal{U}_{g}VV@VV\mathcal{U}_{g}V\\
\mathcal{B}\left(\mathcal{H}\right)@>\mathcal{E}>>\mathcal{B}\left(\mathcal{H}\right)
\end{CD}
\end{equation}
The resource theory of asymmetry was used to give an informational reconstruction of Noether's theorem~\cite{Noether}.

\section{Approach}
\label{sec:approach}

In this section,
we discuss how we develop a mathematical theory of evolutionarity,
which is based on the model of discrete dynamical systems as such systems represent all systems evolving in discrete systems.
We begin by showing that discrete dynamical systems are special cases of process theories.
Then we show how the covariance condition,
discussed in~\S\ref{subsubsec:resourcetheories},
is applied to discrete dynamical systems, and it gives rise to a full-fledged resource theory.
Finally, we explain how we treat influences,
usually called perturbations,
on evolving systems.
\subsection{Formulating discrete dynamical systems}
\label{subsec:dynamicalgraphproctheory}

Now we present two new labels to introduce on states of a discrete dynamical systems (or, equivalently, on vertices of the dynamical graph) that are useful for stating our results. We also provide a formulation of a discrete dynamical system as a process theory both in the absence and in the presence of randomness.

\subsubsection{New labels}
\label{subsubsec:new labels}

Here we introduce two new labels for states of a dynamical system. The first, called transient progeny, looks at their evolution in the future, specifying how fast they reach an attractor, viz.\ dynamics that occurs in the long run. The second, called ancestry, looks at their evolution from the past, identifying how many steps were necessary to evolve to that state.

It is possible to quantify how transient a state~$s$ is by considering its \emph{transient progeny}, (or progeny for short)
the minimum natural number $d$ such that $\phi^{d}\left(s\right)$
is in a cycle. This also quantifies the number of successors of~$s$ that are still transient, including~$s$ itself. The bigger the progeny, the more transient the state is. This is
the length of the shortest path from the vertex associated with that state to a vertex in a cycle in the dynamical graph.
From this point of view, the states of a cycle can be viewed as transient
states of progeny 0 because 0 time steps are necessary to evolve
them to a state in a cycle.

We can also consider a ``backwards distance'' of a state from its
farthest predecessor in the dynamical graph. More formally, we define the \emph{ancestry}
$a$ of a vertex as the maximum length of a path leading to that vertex.
Equivalently, the ancestry of a state $s$, denoted as
$a\left(s\right)$, is the maximum natural number for which there
exist a state~$t$ such that $\phi^{a\left(s\right)}\left(t\right)=s$.
For states~$s$ in an attractor of length~$\ell$, since we have $\phi^{n\ell}\left(s\right)=s$
for all $n\in\mathbb{N}$, the ancestry is formally infinite, so we
set $a\left(s\right)=+\infty$. Instead for transient states, their ancestry
will be a finite natural number.

\subsubsection{Non-random case}
\label{subsubsec:deterministic initial}

In~\S\ref{subsec:DDS}, we explained that,
in a discrete dynamical system $\left(S,\phi\right)$ with no randomness present at any step of its evolution, evolution is described by powers of $\phi$. It is natural to describe this situation with a process theory based on sets and functions~\cite{Coecke-Kissinger}.

As we are dealing with discrete dynamical systems, we construct a process theory where systems are in fact dynamical systems, i.e.\ pairs $\left(S,\phi\right)$, where~$S$
is a set and $\phi:S\rightarrow S$ is the generator of the dynamics on~$s$. Processes between dynamical systems $\left(S_1,\phi_1\right)$ and $\left(S_2,\phi_2\right)$ are arbitrary functions between $S_1$ and $S_2$. In particular, when $S_1=S_2$, we have what we called \emph{influences}.

The composition of dynamical systems is by Cartesian product:
if $\left(S_1,\phi_1\right)$ and $\left(S_2,\phi_2\right)$ are two dynamical systems, their composition is $\left(S_1 \times S_2,\phi_1  \times\phi_2\right)$,
where
\begin{equation}
\phi_1  \times\phi_2:S_1 \times S_2\rightarrow S_1 \times S_2:
\left(s_1,s_2\right)
\mapsto\left(\phi_1\left(s_1\right),\phi_2\left(s_2\right)\right)
\end{equation}
for every $s_1\in S_1$, and every $s_2\in S_2$.
The sequential composition of processes is simply the composition of functions.
Instead, the parallel composition of processes \begin{equation}
f:S_1\rightarrow S_2,\,
f':S'_1\rightarrow S'_2
\end{equation}
is the function
\begin{equation}
f\times f':S_1 \times S'_1\rightarrow S_2 \times S'_2
\end{equation}
such that
\begin{equation}
\left(s_1,s'_1\right)\mapsto \left(f\left(s_1\right),f'\left(s'_1\right)\right)
\end{equation}
for every $s_1\in S_1$ and $s'_1\in S'_1$.

In this setting, it is clear that the trivial dynamical system $\left(I,\phi\right)$, which does not affect other systems when composed with them, is obtained by taking $I$ to be a singleton and $\phi$ to be the identity $\mathds1$.
With this in mind, the process-theoretic states of $\left(S,\phi\right)$,
viz.\ processes from $I$ to~$S$ can be identified with the elements of $S$~\cite{Categories,Coecke2017picturing}, i.e.\ with what we defined as the states of the dynamical system in~\S\ref{subsec:DDS}.
In the following, these states are called ``deterministic states'' to differentiate them from the states permitting randomness.

\subsubsection{Random case}
\label{subsubsec:random initial}

In~\S\ref{subsubsec:introducingrandomness}, we explained that in a discrete dynamical system $\left(S,\phi\right)$ in the presence of randomness, we represent randomness with probability vectors and evolution is described by powers of a linearization of $\phi$. This conceptually distinct situation, can be described naturally based on the process theory of stochastic maps~\cite{Coecke-Kissinger}.

The potential presence of randomness in  a discrete dynamical system $\left(S,\phi\right)$ at any time step forces us to use a different process theory from the case of deterministic initial conditions. Now, if~$S$  has~$M$ elements, systems are pairs $\left(\mathfrak{S},\phi\right)$, where $\mathfrak{S}$ is the simplex of probability vectors in  $\mathbb{R}^M$, namely probability distributions over~$s$. Processes between systems $\left(\mathfrak{S}_1,\phi_1\right)$ and $\left(\mathfrak{S}_2,\phi_2\right)$ are stochastic maps between $\mathfrak{S}_1$ and $\mathfrak{S}_2$.
\begin{definition}\label{def:stochastic influence}
We call \emph{stochastic influence} any stochastic map
$f:\mathfrak{S}\to\mathfrak{S}$.
\end{definition}
Now a stochastic map
\begin{equation}
f:\mathfrak{S}_1\rightarrow\mathfrak{S}_2
\end{equation}
can be represented with
a stochastic matrix once we fix the canonical bases in the domain $\mathfrak{S}_1$ and the codomain $\mathfrak{S}_2$ of~$f$:
\begin{equation}
\label{eq:stochastic}
F=\begin{pmatrix}
p\left(1|1\right) & \cdots& p\left(1|M_1\right)\\
\vdots & \ddots & \vdots\\
p\left(M_2|1\right) & \cdots& p\left(M_2|M_1\right)
\end{pmatrix}.
\end{equation}
The $\left(i,j\right)$ entry is the probability $p\left(i|j\right)$
of jumping to the state of underlying set $S_2$ labeled by $i$ from the state of the underlying set $S_1$ labeled by $j$. This probability is 1 if and only if there is a deterministic transition. The $j$th column
of~$f$ is the evolution of the deterministic state labeled by $j$ under the
covariant stochastic map $f$. 
If the influence is deterministic,
as defined in~\S\ref{subsec:DDS},
we refer this influence now as `deterministic influence'. Note that a deterministic influence only has zeros and ones as entries in its representation as a stochastic matrix.

The composition of a system $\left(\mathfrak{S}_1,\phi_1\right)$ with a system $\left(\mathfrak{S}_2,\phi_2\right)$ is the system $\left(\mathfrak{S}_1\otimes\mathfrak{S}_2,\phi_1\otimes\phi_2\right)$, where $\mathfrak{S}_1\otimes\mathfrak{S}_2$ is the simplex of probability vectors in the tensor product of the real vector spaces associated with $\mathfrak{S}_1$ and $\mathfrak{S}_2$.
The sequential composition of processes is simply their composition as linear functions.
Instead, the parallel composition of processes
\begin{equation}
f:\mathfrak{S}_1\rightarrow\mathfrak{S}_2,\,
f':\mathfrak{S}'_1\rightarrow\mathfrak{S}'_2
\end{equation}
is the tensor-product function
\begin{equation}
f\otimes f':\mathfrak{S}_1\otimes\mathfrak{S}'_1
\to\mathfrak{S}_2\otimes\mathfrak{S}'_2.
\end{equation}
In every process theory, there is a trivial system corresponding to the system that does not affect other in composition.

In the presence of randomness, this system is  $\left(\mathfrak{S}_0,\mathds1\right)$, where $\mathfrak{S}_0$ is the simplex of probability vectors in $\mathbb{R}$, i.e.\ the set containing only the number~1.
With this in mind, the process-theoretic states of a generic dynamical system in the presence of randomness $\left(\mathfrak{S},\phi\right)$ can be identified with the elements of the simplex $\mathfrak{S}$ itself, i.e.\ with probability vectors. In the following, these states are called ``stochastic states''.

\subsection{Imposing covariance}
\label{subsec:imposingcovariance}

In this subsection we introduce the mathematical tools to describe the influence of the external environment on a discrete dynamical system. We focus on \emph{covariant} influences, which are external influences that somehow respect the evolution of a dynamical system. We also show that restricting to covariant influences gives rise to a well-defined resource theory.

\subsubsection{Introducing covariance}
\label{subsubsec:introducingcovariance}

Now we give the formal definition of covariant maps in the process theory of discrete dynamical systems, both in the presence of deterministic and random initial conditions. We also present their foremost mathematical properties.

The definition of covariant maps can be given in a unified way in both process theories we defined above, namely for deterministic and random initial conditions. Recall that, in the case of no randomness allowed, systems are pairs $\left(S,\phi\right)$, where~$s$ is a set, and $\phi$ is the generator of the dynamics. On the other hand, in the case of allowed randomness, systems are pairs $\left(\mathfrak{S},\phi\right)$, where $\mathfrak{S}$ is the set of probability vectors, and now $\phi$ is the linear version of the generator of the dynamics, as described in~\S\ref{subsubsec:introducingrandomness}. We cover both situations by representing systems as $\left(A,\phi\right)$, where~$a$ is either~$S$ or $\mathfrak{S}$.
\begin{definition}
\label{def:covariance}
Let $\left(A_1,\phi_1\right)$ and $\left(A_2,\phi_2\right)$ be dynamical systems. A  map $f:A_1\rightarrow A_2$ is called \emph{covariant} if $f\circ\phi_1=\phi_2\circ f$.
\end{definition}

A covariant map in the absence of randomness is a function between sets, and, in the presence of randomness, is a stochastic map.
In particular, in the case of a stochastic influence $f$, the associated matrix~$F$ is square and the covariance condition takes a simple matrix form, namely $\left[F,\Phi\right]=0$
for stochastic matrix~$\Phi$ associated with $\phi$, the generator of dynamics.

As~$\Phi$ describes a deterministic dynamic, there will be only one non-zero entry per column, and such an entry will be equal to 1.
The commutation condition $\left[F,\Phi\right]=0$, signaling a covariant influence, will add further constraints to the entries of $F$. Such constraints in general depend on~$\Phi$.

Note that, in particular, all powers of $\phi$, namely, \begin{equation}
    \phi^n:A \rightarrow A,\;
    n\in\mathbb{N},
\end{equation}
are covariant, as
\begin{equation}
    \phi^n\circ\phi=\phi\circ\phi^n=\phi^{n+1}.
\end{equation}
\begin{proposition}
\label{prop:n-covariant}
Let $\left(A_1,\phi_1\right)$ and $\left(A_2,\phi_2\right)$ be dynamical systems. A map $f:A_1\to A_2$
is covariant if and only if (iff)
\begin{equation}
\label{eq:covariant}
f\circ\phi_1^n
    =\phi_2^n\circ f,\;
     n\in\mathbb{N}
\end{equation}
\end{proposition}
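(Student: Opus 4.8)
The plan is to prove the two implications separately, with the forward implication being immediate and the reverse following by a short induction on $n$. The whole content of the statement is that covariance with respect to the single generator $\phi$ automatically propagates to every power $\phi^n$, which is exactly what one expects given that $\{\phi^n\}_{n\in\mathbb{N}}$ carries the structure of a commutative monoid, as noted in~\S\ref{subsubsec:dynamicalnetwork}.

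For the direction where Eq.~\eqref{eq:covariant} implies covariance, I would simply specialise the hypothesis to $n=1$: since $\phi_1^1=\phi_1$ and $\phi_2^1=\phi_2$, the relation $f\circ\phi_1^n=\phi_2^n\circ f$ evaluated at $n=1$ reads $f\circ\phi_1=\phi_2\circ f$, which is precisely Definition~\ref{def:covariance}. No further work is required in this direction.

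For the converse, I would assume $f$ is covariant, i.e.\ $f\circ\phi_1=\phi_2\circ f$, and argue by induction on $n$. The base case $n=0$ uses the convention $\phi^0=\mathds1$, so both sides reduce to $f$ (equivalently, one may start the induction at $n=1$, where the assertion is exactly the covariance assumption). For the inductive step, assuming $f\circ\phi_1^n=\phi_2^n\circ f$, I would compute
\begin{equation}
f\circ\phi_1^{n+1}=f\circ\phi_1^n\circ\phi_1=\phi_2^n\circ f\circ\phi_1=\phi_2^n\circ\phi_2\circ f=\phi_2^{n+1}\circ f,
\end{equation}
where the first equality uses $\phi_1^{n+1}=\phi_1^n\circ\phi_1$, the second applies the inductive hypothesis, the third inserts the base covariance $f\circ\phi_1=\phi_2\circ f$, and the last re-collects powers of $\phi_2$.

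There is no genuine obstacle here, since the argument is a routine induction; the only point requiring a little care is the associativity of composition, which is what licenses the free regrouping of $f\circ\phi_1^n\circ\phi_1$ in the display above. This associativity holds in both process theories introduced in~\S\ref{subsec:dynamicalgraphproctheory}, because sequential composition is ordinary composition of functions in the deterministic case and of linear (stochastic) maps in the random case. I would make sure to flag explicitly that the single covariance hypothesis is consumed exactly once, at the third equality, since that is the only place the assumption enters the induction.
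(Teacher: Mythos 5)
Your proof is correct and follows essentially the same route as the paper's: the forward direction by specialising to $n=1$, and the converse by induction on $n$. The only (immaterial) difference is that you decompose $\phi_1^{n+1}=\phi_1^n\circ\phi_1$ and apply the induction hypothesis before the covariance assumption, whereas the paper writes $\phi_1^{n+1}=\phi_1\circ\phi_1^n$ and applies them in the opposite order.
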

\begin{proof}
Sufficiency is straightforward: it is enough to take $n=1$ to have
the thesis.

To prove necessity, we proceed by induction. If $n=0$ or $n=1$
the property is obviously true. Let us assume it is true for a generic
$n$, and let us prove is holds also for $n+1$. We have
\begin{align}
f\circ\phi_1^{n+1}=&f\circ\phi_1\circ\phi_1^n=\phi_2\circ f\circ\phi_1^n\nonumber\\
=&\phi_2\circ\phi_2^n\circ f=\phi_2^{n+1}\circ f,
\end{align}
where,
in the second equality,
we used the covariance of~$f$, and in
the third equality the induction hypothesis. This proves the proposition.
\end{proof}
\subsubsection{Covariance yields a resource theory}
\label{subsubsec:yields}

Now we show that restricting ourselves to covariant maps yields strict symmetric monoidal subcategories of the process theories we introduced in to describe discrete dynamical systems. Thus, we think of covariant maps as free operations of a new resource theory.

\begin{proposition}
Covariant maps form a strict symmetric monoidal subcategory.
\end{proposition}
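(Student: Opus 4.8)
The plan is to verify directly the axioms that make the covariant maps a strict symmetric monoidal subcategory of either of the process theories introduced above. The subcategory shares all objects (dynamical systems) with the parent category and restricts only the morphisms to covariant ones, so the task reduces to checking four closure properties dictated by the notion of a resource theory in~\S\ref{subsubsec:resourcetheories}: (i) every identity map is covariant; (ii) the covariant maps are closed under sequential composition; (iii) they are closed under parallel composition; and (iv) every swap is covariant. Each verification is essentially a one-line computation from Definition~\ref{def:covariance}, so I do not expect any genuine difficulty; the only point requiring care is to keep track of which generator the covariance condition is being tested against on composite systems.

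First I would dispatch the category part. The identity $\mathds1:(A,\phi)\to(A,\phi)$ satisfies $\mathds1\circ\phi=\phi=\phi\circ\mathds1$, hence it is covariant. For sequential composition, given covariant maps $f:(A_1,\phi_1)\to(A_2,\phi_2)$ and $g:(A_2,\phi_2)\to(A_3,\phi_3)$, I would chain the two covariance identities:
\begin{equation}
(g\circ f)\circ\phi_1=g\circ(\phi_2\circ f)=(\phi_3\circ g)\circ f=\phi_3\circ(g\circ f),
\end{equation}
so that $g\circ f$ is covariant. This already establishes that covariant maps form a subcategory.

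Next I would handle the monoidal and symmetric structure. For the parallel composition of covariant maps $f:(A_1,\phi_1)\to(A_2,\phi_2)$ and $f':(A_1',\phi_1')\to(A_2',\phi_2')$, I note that the relevant generator on the composite $A_1\times A_1'$ is $\phi_1\times\phi_1'$, and evaluating on a pair shows $(f\times f')\circ(\phi_1\times\phi_1')=(\phi_2\times\phi_2')\circ(f\times f')$ componentwise, using covariance of $f$ and $f'$ separately. For the swap $\sigma:(A_1\times A_2,\phi_1\times\phi_2)\to(A_2\times A_1,\phi_2\times\phi_1)$, both $\sigma\circ(\phi_1\times\phi_2)$ and $(\phi_2\times\phi_1)\circ\sigma$ send $(s_1,s_2)$ to $(\phi_2(s_2),\phi_1(s_1))$, so $\sigma$ is covariant. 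The same computations go through verbatim in the random case, with $\times$ replaced by $\otimes$ and the componentwise check replaced by evaluation on product basis vectors extended by linearity. The main---though still mild---obstacle is precisely this bookkeeping: confirming that the composite and swapped systems carry the generators $\phi_1\times\phi_2$ and $\phi_2\times\phi_1$, so that the covariance equalities are tested against the correct dynamics, after which closure is immediate.
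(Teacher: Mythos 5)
Your proposal is correct and follows essentially the same route as the paper: verify covariance of the identity, closure under sequential composition via chaining the two covariance identities, closure under parallel composition using that the composite system carries the product/tensor generator, and covariance of the swap. The only cosmetic difference is that you check the parallel and swap cases pointwise (or on basis vectors), while the paper invokes the interchange law and the defining property of \texttt{SWAP} directly; the content is identical.
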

\begin{proof}
The identity, being part
of the dynamics, is obviously covariant. The sequential composition
of two covariant maps is still covariant.
Indeed, if $\left(A_1,\phi_1\right)$, $\left(A_2,\phi_2\right)$, and $\left(A_3,\phi_3\right)$ are dynamical systems,
and if 
\begin{equation}
f:A_1\rightarrow A_2,\,
g:A_2\rightarrow A_3
\end{equation}
are both covariant, then
\begin{equation}
g\circ f:A_1\rightarrow A_3
\end{equation}
satisfies
\begin{equation}
    g\circ f\circ\phi_1=g\circ\phi_2\circ f=\phi_3\circ g\circ f,
\end{equation}
where we use the covariance of~$f$ in the first equality and the
covariance of~$g$ in the second equality. Similarly, the parallel
composition of covariant maps is still a covariant map. This is because
the generator of dynamics on the composition of two discrete dynamical systems is
given by the parallel composition
of the generators of the respective dynamics. Hence,
if $\left(A_1,\phi_1\right)$, $\left(A_2,\phi_2\right)$, $\left(A_3,\phi_3\right)$, and $\left(A_4,\phi_4\right)$ are dynamical systems, and if
\begin{equation}
f:A_1\rightarrow A_2,\,
g:A_3 \rightarrow A_4,
\end{equation}
then
\begin{align}
\left(f\otimes g\right)&\circ\left(\phi_1\otimes\phi_3\right)
=\left(f\circ\phi_1\right)\otimes\left(g\circ\phi_3\right)\nonumber\\
=&\left(\phi_2\circ f\right)\otimes\left(\phi_4\circ g\right)=\left(\phi_2\otimes\phi_4\right)\circ\left(f\otimes g\right).
\end{align}
The swap between two discrete dynamical systems $\left(A_1,\phi_1\right)$, $\left(A_2,\phi_2\right)$ is a covariant operation:
\begin{equation}
\mathtt{SWAP}\circ\left(\phi_1\otimes\phi_2\right)=\left(\phi_2\otimes\phi_1\right)\circ\mathtt{SWAP}.
\end{equation}
as
it immediately follows from the property of the \texttt{SWAP} itself.
\end{proof}
\noindent
This proposition shows that the definition of covariant operations is well posed,
and it gives rise to a full-fledged resource theory.
Therefore, we
can rightfully consider them as free operations.

\subsubsection{Covariant influence}
\label{subsec:detvsran}

Now we explain the notion of influence on a dynamical system, introduced in~\S\ref{subsec:DDS} and Definition~\ref{def:stochastic influence},
as the effective representation of the action of something external to the dynamical system itself. External influence happens in several situations, given the fact that truly isolated systems do not really exist.
We also motivate the introduction of covariant influence,
which arises due to the dynamical time scale of evolution being shorter than the time scale for changes due to influences external to the dynamical system.

Discrete dynamical systems are often studied as isolated systems, where one is interested in how the system evolves subject to its own dynamics. However, in Nature truly isolated system do not exist, and there is always something surrounding a dynamical system---an environment---that influences it. For this reason, despite the fact that a dynamical system and its environment can be jointly treated as a larger, isolated, dynamical system, it is still interesting to focus purely on the dynamical system and describe the presence of the environment through some effective action that disrupts the evolution of the dynamical system. This action is a function that maps states of the dynamical system, whether deterministic or stochastic, to other states. As such, we model this action by incorporating influence, whether deterministic or stochastic.

Among all possible influences, we focus on covariant ones, namely those obeying Definition~\ref{def:covariance}. These influences, despite changing the evolution of the system, work in such a way that it is compatible with the dynamics.
Indeed, the covariance condition can be stated as follows: if we evolve the dynamical system for $n$ time steps and then we disturb it, it is the same as first disturbing the system with the same influence and then letting the system evolve for $n$ time steps.
The idea behind this is that a system has time to adapt to the external influence, perhaps because such an influence occurs on a longer time scale. In particular, as we show in~\S\ref{sec:results}, a covariant influence cannot undo the evolution of state of a dynamical system under its dynamic.

\section{Results}
\label{sec:results}

In this section we present the results of this article. First we identify free states in the resource theory of covariant influences, which allows to identify two different types of resources into play whether randomness is forbidden or not. In the former case, we have the resource theory of evolutionarity, and in the latter case, the resource theory is of non-attractorness. Finally, we turn to the conversion problem, i.e.\ determining when one state can be converted into another by means of a covariant influence.

\subsection{Free states}
\label{subsec:freestates}

In this subsection we partition the states of a dynamical system into free and non-free ones according to the resource theory of covariant influences. Free states are  very different whether randomness is forbidden or not. For this reason, we treat these two situations separately.

\subsubsection{Deterministic case~}

Now we examine the situation in which randomness is completely forbidden, both in the initial conditions
(i.e., the deterministic state)
and in the changes induced by a covariant influence.
This corresponds to a dynamical system whose complexity is low enough to allow us to use a fully deterministic description.

In the context of resource theories, free deterministic states of a system~$s$ are covariant maps $f:I\rightarrow S$, where $I$ is a singleton. We show that deterministic free states can be identified with fixed points of a discrete dynamical system~\eqref{eq:fixed point}.
\begin{proposition}\label{prop:free deterministic}
There is a bijection between free deterministic states of a discrete dynamical system $\left(S,\phi\right)$ and its fixed points.
\end{proposition}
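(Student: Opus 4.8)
The plan is to unwind the definitions of ``free deterministic state'' and ``covariant map'' and to show that the covariance condition collapses to the fixed-point equation~\eqref{eq:fixed point}. First I would recall, from~\S\ref{subsubsec:deterministic initial}, that the trivial system is $\left(I,\mathds1\right)$ with $I$ a singleton $\{*\}$, and that a process-theoretic state of $\left(S,\phi\right)$ is a map $f:I\to S$, identified with the element $s:=f\left(*\right)\in S$. This identification is itself the bijection between states-as-processes and elements of~$S$ inherited from the process theory of sets and functions; the content of the proposition is to single out which elements $s$ arise from \emph{covariant} such maps.

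Next I would specialise Definition~\ref{def:covariance} to the source system $\left(I,\mathds1\right)$. Covariance of $f:I\to S$ reads $f\circ\mathds1=\phi\circ f$, i.e.\ simply $f=\phi\circ f$. Evaluating both sides on the unique point $*\in I$ gives $f\left(*\right)=\phi\left(f\left(*\right)\right)$, that is $s=\phi\left(s\right)$, which is exactly the fixed-point condition~\eqref{eq:fixed point}. Hence $f$ is covariant if and only if its associated element $s=f\left(*\right)$ is a fixed point of~$\phi$.

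Finally I would assemble the bijection explicitly. The assignment $f\mapsto f\left(*\right)$ sends free deterministic states (covariant maps $I\to S$) to fixed points, by the computation above; conversely, given a fixed point $s$, the map $f_s:I\to S$ defined by $*\mapsto s$ satisfies $\phi\left(f_s\left(*\right)\right)=\phi\left(s\right)=s=f_s\left(*\right)$, hence is covariant and so is a free deterministic state. These two assignments are manifestly mutually inverse, which establishes the claimed bijection. I do not expect any genuine obstacle here: the entire statement reduces to correctly identifying the trivial dynamics $\phi_I=\mathds1$ and observing that covariance against the identity generator is nothing but invariance under~$\phi$. The only point requiring a moment's care is that the state-to-element identification is itself a bijection, a fact inherited from the underlying process theory and recalled in~\S\ref{subsubsec:deterministic initial}.
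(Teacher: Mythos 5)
Your proposal is correct and follows essentially the same route as the paper's own proof: both unwind the covariance condition $f=\phi\circ f$ against the identity dynamics on the trivial system $\left(I,\mathds1\right)$, evaluate at the unique point of~$I$ to obtain the fixed-point equation $s=\phi\left(s\right)$, and then construct the inverse assignment from fixed points back to covariant maps. Your extra remark that the two assignments are mutually inverse merely makes explicit what the paper phrases as the uniqueness of the covariant function $f$ with $f\left(1\right)=s$.
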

\begin{proof}
Consider a free deterministic state $f:I\to S$; this sends the only element
of $I$, which we call $1$ by convention, to a state~$s$ of~$s$. The dynamic on the trivial dynamical system $I$ is given by the identity; therefore, the
condition for covariance becomes $f=\phi\circ f$.
Applying~$f$ to
1, and recalling $f\left(1\right)=s$, we have $s=\phi\left(s\right)$.
This means that~$s$ is
a fixed point.

Conversely, given a fixed point~$s$ of the evolution of $\left(S,\phi\right)$,
i.e.\ $\phi\left(s\right)=s$, we can construct the unique covariant function
$f:I\rightarrow S$ such that $f\left(1\right)=s$. The covariance
of~$f$ is guaranteed by the fact that~$s$ is a fixed point.
\end{proof}

Not all discrete dynamical systems have fixed points; this depends on the
dynamics. By Proposition~\ref{prop:free deterministic}, those systems with no fixed points, in the lack of randomness, have no free states.

\subsubsection{Stochastic case~}
\label{subsubsec:nonattractornesscase}

Now we analyze the situation of a dynamical system where randomness is permitted. Thus, we can have stochastic states at any time step and also the influence can be stochastic. This situation models dynamical systems where their behavior is too complicated to be described feasibly in purely deterministic terms, due to the presence of an influence that acts on the dynamical system as a stochastic process.
We see that, for stochastic dynamical systems, the landscape of free states becomes richer.

As seen in~\S\ref{subsec:DDS},
states of a stochastic dynamical system can be represented as probability vectors~$\bm{p}$.
In this setting, free stochastic states are defined as maps
\begin{equation}
f:\mathfrak{S}_0\rightarrow \mathfrak{S},
\end{equation}
obeying the covariance condition of Def.~\ref{def:covariance},
where
now $f\left(1\right)=\bm{p}$ is still equivalent to $\phi\left(\bm{p}\right)=\bm{p}$.
By a similar argument to that for deterministic free states, the covariance condition on stochastic maps entails the probability vector must be invariant under the generator of the dynamics.

Below we prove equivalence between the existence of a bijection between free stochastic states and probability distributions over the attractors of the discrete dynamical system with the requirements that every deterministic state in an attractor sharing equal probability with all other states in the same attractor.
Before formalizing this proposition,
we introduce pertinent notation,
building on the ability to partition deterministic states into basins of attraction.
For~$\bm{b}_i$ referring to block corresponding to the $i^\text{th}$ basin of attraction,
we write
\begin{equation}\label{eq:block decomposition}
\bm{p}=\bm{b}_1\oplus\cdots\oplus\bm{b}_k.
\end{equation}
with~$k$ basins of attraction and~$\bm p$ the probability vector.

Let us focus on the first block as,
for other blocks,
the analysis is completely analogous. Note that, for transient deterministic
states~$s$, by taking large enough~$m$, we can always ensure that
$\phi^{m}\left(s\right)$ is in an attractor.
For example, it is
enough to take~$M$ as the maximum of the transient progenies of transient states.
In $\bm{b}_1$ there are $\ell_1$ states corresponding to
the deterministic states in the attractor (of length $\ell_1$)
and other~$m_1$ transient deterministic states, i.e.\ states that
are not in the attractor, but eventually goes into it.
\begin{definition}
A stochastic state is called uniform if two conditions hold:
\begin{enumerate}
    \item all probabilities associated with transient deterministic states are 0;
    \item deterministic states in the same attractor have the same probability.
\end{enumerate}
\end{definition}
\noindent
In simpler terms, every attractor has a uniform distribution over
its deterministic states (with uniform probabilities $\nicefrac1{\ell}$, $\ell$
being the length of the attractor), weighted by a probability distribution
over the different attractors.

Below and henceforth we consider integers modulo some number, say~$m$. This means picking a convenient representative for each remainder class modulo~$M$ so that expressions are well defined. 
Here we consider two representatives for each remainder class. The one we pick depends on the concrete setting where we consider integers modulo 1: in some cases it may mean picking a representative from 1 to~$m$, in others from 0 to~$m-1$.

\begin{proposition}
\label{prop:free random states}
There is a bijection
between free stochastic states of a discrete dynamical system and uniform probability vectors.
\end{proposition}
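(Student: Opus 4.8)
The plan is to reduce the statement to a linear-algebraic characterisation of the fixed points of the dynamical matrix $\Phi$. As already noted just before the statement, a free stochastic state is a covariant map $f:\mathfrak{S}_0\to\mathfrak{S}$, and since the dynamic on $\mathfrak{S}_0$ is the identity, covariance reduces to $f=\phi\circ f$; writing $\bm p=f(1)$ this is exactly $\phi(\bm p)=\bm p$, i.e.\ $\Phi\bm p=\bm p$. Because $f$ is completely determined by the value $\bm p=f(1)$, the assignment $f\mapsto f(1)$ is injective and its image is precisely the set of invariant probability vectors. Hence it suffices to prove that a probability vector satisfies $\Phi\bm p=\bm p$ if and only if it is uniform; the claimed bijection then follows immediately.

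First I would verify the easy implication, that every uniform $\bm p$ is invariant, by computing $(\Phi\bm p)_i=\sum_{s_j\in\phi^{-1}(s_i)}p_j$ componentwise. For a transient $s_i$ the preimage $\phi^{-1}(s_i)$ consists only of transient states, since an attractor state can map only to an attractor state, so both sides vanish. For $s_i$ in an attractor of length $\ell$, the preimage is the single preceding attractor state together with possibly some transient states; the latter carry zero probability and the former carries probability equal to $p_i$ because uniform states are constant along each attractor. Thus $(\Phi\bm p)_i=p_i$ in every component and $\Phi\bm p=\bm p$.

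The substantive direction is the converse. Suppose $\Phi\bm p=\bm p$, so that $\Phi^m\bm p=\bm p$ for every $m$. I would first argue that every transient state carries zero probability. Choosing $m$ at least the maximum transient progeny, the map $\phi^m$ sends every state into an attractor, so $\phi^m(s_j)$ is never transient; consequently $(\Phi^m\bm p)_i=\sum_{s_j\in(\phi^m)^{-1}(s_i)}p_j=0$ for each transient $s_i$, and invariance forces $p_i=0$. With the transient mass eliminated, I would restrict the invariance condition to a single attractor $s_1\to s_2\to\cdots\to s_\ell\to s_1$: the only preimage of $s_i$ inside the attractor is $s_{i-1}$, and the remaining transient preimages now contribute nothing, so the equation $(\Phi\bm p)_i=p_i$ collapses to $p_{i-1}=p_i$ for all $i$ taken cyclically. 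Therefore the probabilities are constant along each attractor, and $\bm p$ is uniform.

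The step I expect to be the main obstacle is this \emph{flushing out} of transient probability, namely showing $p_i=0$ on transient states. The key observation that makes it work is that the forward dynamics maps attractors into attractors and, after sufficiently many steps, maps everything into attractors, so probability can never flow back onto the transient region; invariance under $\Phi^m$ for large $m$ then pins the transient mass to zero. Once this is secured, the attractor analysis is just the familiar fact that a stationary distribution of a cyclic permutation must be constant, and reassembling the blocks $\bm b_1\oplus\cdots\oplus\bm b_k$ of~\eqref{eq:block decomposition} yields the full characterisation and hence the bijection $f\mapsto f(1)$.
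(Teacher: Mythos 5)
Your proposal is correct and follows essentially the same route as the paper's proof: identify free stochastic states with $\Phi$-invariant probability vectors, use invariance under a sufficiently high power $\Phi^m$ (with $m$ at least the maximum transient progeny) to force all transient probabilities to vanish, and then observe that invariance under $\Phi$ restricted to an attractor is invariance under a cyclic permutation, forcing equal probabilities along each attractor. The only differences are presentational — you compute componentwise via preimages $\phi^{-1}(s_i)$ where the paper manipulates block vectors $\bm{b}_1\oplus\cdots\oplus\bm{b}_k$, and you spell out the converse (uniform $\Rightarrow$ invariant) which the paper leaves as ``a similar argument'' — so no substantive gap remains.
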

\begin{proof}
We start by proving that if a probability vector $\bm{p}$,
representing a stochastic state,
is free, that is, invariant under $\phi$, then it must be uniform. Following Eq.~\eqref{eq:block decomposition}, we arrange
the deterministic states $s_i$ corresponding to the first block~$\bm{b}_1$ so that
the first $\ell_1$ entries of $\bm{b}_1$ are the deterministic states of its attractor, in such a way that $\phi\left(s_i\right)=s_{i+1}$, where $i+1$ is intended to be modulo $\ell_1$:
thus, we pick a representative in the remainder classes modulo $\ell_1$ between 1 and $\ell_1$. 
By Prop.~\ref{prop:n-covariant},
the request $\phi\left(\bm{p}\right)=\bm{p}$ is equivalent
to $\phi^n\left(\bm{p}\right)=\bm{p}$ for every $n\in\mathbb{N}$.
In particular, we require  covariance for $n=m$, where~$M$ is the minimum exponent for which $\phi^m\left(s\right)$ is in the attractor, for every deterministic state~$s$ in the first basin of attraction. We obtain
\begin{equation}
\bm{b}_1=\left(\begin{array}{c}
p_1\\
\vdots\\
p_{\ell_1}\\
\hline p_{\ell_1+1}\\
\vdots\\
p_{\ell_1+m_1}
\end{array}\right)=\left(\begin{array}{c}
p'_{1-m}\\
\vdots\\
p'_{\ell_1-m}\\
\hline 0\\
\vdots\\
0
\end{array}\right)=\phi^{m}\left(\bm{b}_1\right),\label{eq:probability covariant}
\end{equation}
where $p'_i$ is either $p_i$ or $p_i$ plus a probability
contribution coming from the transient states, and the subscripts
of the entries of the vector in right-hand side are  intended to be
modulo $\ell_1$, in the same way as discussed above.
Equation~\eqref{eq:probability covariant}
immediately tells us that the probabilities associated with transient
states must all vanish.
Now,
if we require the invariance under $\phi$
of such a $\bm{b}_1$ with 0 probabilities for transient states,
we obtain
\begin{equation}
\bm{b}_1=\begin{pmatrix}
p_1\\
\vdots\\
p_{\ell_1}\\
\hline 0\\
\vdots
\end{pmatrix}=\left(\begin{array}{c}
p_{\ell_1}\\
p_1\\
\vdots\\
p_{\ell_1-1}\\
\hline 0\\
\vdots
\end{array}\right)=\phi\left(\bm{b}_1\right).
\end{equation}
This yields the condition $p_1=\cdots=p_{\ell_1}$.

To conclude the proof, we repeat the same argument for each of the
$k$ blocks. Therefore, the probabilities associated with the deterministic
states of an attractor must be equal for the states in that
attractor (but possibly different between different attractors).

Conversely, by a similar argument as above, one can easily check that uniform states are invariant under $\phi$.
\end{proof}

The situation of a discrete dynamical system differs a lot whether randomness is allowed or not. Indeed, when randomness is forbidden, free states are only fixed points, whereas when randomness is permitted, free states are associated with attractors of any length. In particular, it means that free stochastic states always exist because any discrete dynamical system has attractors.

\subsection{The conversion problem: evolutionarity}
\label{subsec:conversion}

Now we analyze the conversion problem for discrete dynamical systems where randomness is forbidden. In this case, we call the resource ``evolutionarity'' as all valuable states (i.e.\ non-free) evolve in time because they are not fixed points. We are able to give a necessary and sufficient condition for convertibility under covariant operations, and we show that covariant influences shorten the length of attractors.

Denoting the length of a deterministic state $s$  by $\ell$, its transient progeny by $d$, and its ancestry by $a\left(s\right)$, we have the complete characterization of the state transitions within a single dynamical system that can occur due to deterministic influences, presented in the following theorem.
\begin{theorem}
\label{thm:necessary and sufficient} Let~$s$ and~$s'$ be two deterministic
states of a discrete dynamical system $\left(S,\phi\right)$. Then there exists
a covariant  influence converting~$s$ into~$s'$ iff
\begin{align}
\label{eq:condcovinfl}
&d'\leq d,\\
&\ell'\mid\ell,\label{eq:condcovdiv}\\
&a\left(\phi^n\left(s'\right)\right)\geq a\left(\phi^n\left(s\right)\right)\label{eq:ancestry condition}
\end{align}
for $n=0,\ldots,d'-1$.
\end{theorem}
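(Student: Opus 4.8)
The plan is to prove the two directions separately, relying throughout on the single covariance identity $\phi^n(s')=\phi^n(f(s))=f(\phi^n(s))$, valid for all $n$ by Proposition~\ref{prop:n-covariant}. Two elementary facts about a covariant influence $f$ will do most of the work. First, $f$ sends cyclic states to cyclic states: if $\phi^p(x)=x$ then $\phi^p(f(x))=f(\phi^p(x))=f(x)$, so $f(x)$ lies on a cycle whose length divides $p$. Second, $f$ cannot decrease ancestry: if $\phi^a(t)=x$ then $\phi^a(f(t))=f(x)$, whence $a(f(x))\ge a(x)$, with the convention $a\equiv+\infty$ on cyclic states (consistent with the first fact).

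For necessity I would apply these to $x=\phi^d(s)$, which is cyclic. The first fact gives that $\phi^d(s')=f(\phi^d(s))$ is cyclic, so the progeny of $s'$ is at most $d$, i.e.\ $d'\le d$, which is~\eqref{eq:condcovinfl}. Applying $f$ to $\phi^{d+\ell}(s)=\phi^d(s)$ gives $\phi^{d+\ell}(s')=\phi^d(s')$, so the cyclic state $\phi^d(s')$, which sits in the attractor of $s'$ of length $\ell'$, has period dividing $\ell$; hence $\ell'\mid\ell$, which is~\eqref{eq:condcovdiv}. Finally, the ancestry inequality applied to $x=\phi^n(s)$ together with $\phi^n(s')=f(\phi^n(s))$ yields $a(\phi^n(s'))\ge a(\phi^n(s))$ for every $n$, in particular on $n=0,\dots,d'-1$, which is~\eqref{eq:ancestry condition}. (For $n\ge d'$ the left side is $+\infty$, so the stated range is exactly where the condition has content.)

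For sufficiency I would build $f$ explicitly. Since basins of attraction are $\phi$-invariant and partition $S$, I can set $f$ equal to the identity on every basin other than the basin $B$ of $s$; this is covariant and decouples, leaving only the task of defining a covariant $f$ on $B$ with $f(s)=s'$ (mapping $B$ into the basin of $s'$). On the forward orbit $\{\phi^n(s)\}_{n\ge0}$ covariance forces $f(\phi^n(s))=\phi^n(s')$, and I would first check well-definedness: the only coincidences $\phi^{n_1}(s)=\phi^{n_2}(s)$ occur for $n_1,n_2\ge d$ with $n_1\equiv n_2\pmod\ell$, and there $d'\le d$ makes both $\phi^{n_i}(s')$ cyclic while $\ell'\mid\ell$ forces them equal. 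It then remains to extend $f$ to the transient states of $B$ lying off this orbit. Each such state $x$ meets the orbit of $s$ for the first time after some $\rho(x)$ steps, at a point $\phi^m(s)$; I would send the entire subtree of states merging into the orbit there onto a fixed longest backwards path emanating from $\phi^m(s')$, placing a state of backward-depth $k$ at the depth-$k$ node of that path, a prescription one checks is covariant and compatible with the values already fixed on the orbit.

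The main obstacle, and the only place the ancestry hypothesis is used, is ensuring this backward extension never runs out of room. When the merge point $\phi^m(s)$ has $m\ge d'$, its image $\phi^m(s')$ is cyclic and admits backwards paths of every length (winding around the cycle), so no hypothesis is needed. The delicate case is $m<d'$, where $\phi^m(s')$ is transient: the subtree merging at $\phi^m(s)$ has depth at most $a(\phi^m(s))$, so I need a backwards path from $\phi^m(s')$ of at least that length, i.e.\ $a(\phi^m(s'))\ge a(\phi^m(s))$, which is precisely hypothesis~\eqref{eq:ancestry condition} on $0\le m\le d'-1$. Assembling these subtree maps over all merge points produces a globally covariant $f$ with $f(s)=s'$. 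I expect the routine bookkeeping---well-definedness on the orbit and the fact that the merging subtrees genuinely partition the off-orbit states---to be the fiddly part, while the conceptual crux is this ancestry-controlled embedding of transient backwards trees.
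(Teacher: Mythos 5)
Your proposal is correct, and its overall strategy coincides with the paper's: necessity by applying covariance to the relation $\phi^{d+\ell}(s)=\phi^d(s)$ and by lifting longest incoming paths through $f$ (this is exactly the content of the paper's Lemmas~\ref{lem:transient necessary} and~\ref{lem:ancestry necessary}), and sufficiency by an explicit construction that is the identity outside the basin of $s$, is forced to $\phi^n(s)\mapsto\phi^n(s')$ on the forward orbit, and is then extended to the rest of the basin. Where you differ is in the bookkeeping of that extension, and your version is leaner. The paper splits sufficiency into two lemmas ($s'$ an attractor state, Lemma~\ref{lem:sufficient1}; $s'$ transient, Lemma~\ref{lem:sufficient2}) and, in the latter, carves the basin into the successor set $\mathcal{S}$, the predecessor set $\mathcal{P}_0$, and the pruned sets $\widetilde{\mathcal{P}}_n=\mathcal{P}_n\setminus\left(\mathcal{P}_{n-1}\cup\mathcal{S}\right)$, defines $f$ piecewise by formulas such as $\phi^{a(s_n)-\delta(w;s_n)}(s_n'^*)$ with modular exponents, and proves covariance by ruling out, case by case, what $\phi(w)$ can be. Your partition by first-meeting point is the same partition in substance---for transient $\phi^n(s)$ your merge subtree is exactly $\widetilde{\mathcal{P}}_n$, and the depth-$k$ node of your fixed longest backwards path from $\phi^n(s')$ plays the role of the paper's $\phi^{a(s_n)-k}(s_n'^*)$---but it makes covariance essentially automatic (a depth-$k$ state maps one step along the path to a depth-$(k-1)$ state), it treats attractor and transient targets uniformly rather than in two separate lemmas, and it isolates exactly where each hypothesis enters: $d'\le d$ and $\ell'\mid\ell$ in the well-definedness check on the orbit, and the ancestry condition only at merge points $\phi^m(s)$ with $m<d'$. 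What the paper's heavier treatment buys in exchange is a fully explicit formula for $f$ on every state, including the verification that exponents taken modulo $\ell$ and modulo $\ell'$ select the same cycle state (the step where $\ell'\mid\ell$ is invoked on the attractor), which your sketch correctly but tersely compresses into the orbit well-definedness check; if you write your argument out in full, that check and the claim that merge subtrees partition the off-orbit basin states are the two places that need the careful justification the paper supplies.
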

\noindent
In the following, we provide the proof of this result, dividing it into different pieces.

Theorem~\ref{thm:necessary and sufficient} gives necessary and sufficient conditions for the conversion of deterministic states of a discrete dynamical system under covariant influences. We can phrase these conditions in the language of monotones (see \S\ref{subsec:resource}). Indeed, we can view the transient progeny $d$ of a deterministic state as a function from the set of deterministic states to the set $\mathbb{N}$ of natural numbers, seen as the (totally) ordered set $\left(\mathbb{N},\leq\right)$. The length, instead, can be obtained from a function from the set of deterministic states and $\mathbb{N}$, this time seen as the partially ordered set $\left(\mathbb{N},\mid\right)$, where $\mid$ denotes the divisibility partial order. Finally, the ancestry, is a function from the set of states to $\mathbb{N}\cup \left\{+\infty\right\}$, totally ordered by the $\geq$ order. According to Theorem~\ref{thm:necessary and sufficient}, these functions are monotones in our resource theory. However, not only can they be used to formulate necessary conditions for the conversion of states, but also sufficient~\eqref{eq:condcovinfl}--\eqref{eq:ancestry condition}.
Thus, they form a complete set of monotones.

\subsubsection{Necessary conditions~}

As a first step to solve the conversion problem, we examine necessary conditions for the conversion of one deterministic state into another. These conditions involves labels of deterministic states: the lengths, the transient progenies and the ancestries of the states involved.

We saw that, when randomness is forbidden, there is a bijection between free states and fixed points of the discrete dynamical system. Thus, all non-free states, i.e.\ valuable states, are states that evolve in time. This makes us identify the relevant resource as ``evolutionarity'', which is the property that a deterministic state evolves.

Let us consider two deterministic states~$s$ and $s'$ of the same discrete dynamical system $\left(S,\phi\right)$ (the case of states of two distinct dynamical systems is treated in Appendix~\ref{app:different dynamical}).
We want to determine when covariant influences can transform~$s$ into~$s'$. Here we formulate necessary conditions.
 The first involves the transient progeny and the length~$\ell$ of a state.
\begin{lemma}
\label{lem:transient necessary}In a discrete dynamical system without randomness,
states of transient progeny $d$  and of length $\ell$
are mapped to states of transient progeny $d'\leq d$ and of length $\ell'$, such that $\ell'\mid\ell$. 
\end{lemma}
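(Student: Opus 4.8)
The plan is to exploit covariance in the strengthened form given by Proposition~\ref{prop:n-covariant}, namely $f\circ\phi^n=\phi^n\circ f$ for every $n\in\mathbb{N}$, together with the key observation that a covariant influence maps cyclic states to cyclic states. Write $s'=f(s)$, and let $t:=\phi^d(s)$ be the first point of the orbit of $s$ lying on the attractor of its basin; by the definition of transient progeny, $t$ sits on a cycle of length $\ell$, so $\phi^\ell(t)=t$.

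First I would dispose of the length condition. The decisive computation is
\begin{equation*}
\phi^\ell\bigl(f(t)\bigr)=f\bigl(\phi^\ell(t)\bigr)=f(t),
\end{equation*}
where the first equality is covariance and the second uses $\phi^\ell(t)=t$. Hence $f(t)$ is a fixed point of $\phi^\ell$, so it lies on a cycle whose minimal period $\ell'$ divides $\ell$. This already yields $\ell'\mid\ell$ \emph{provided} I can identify that cycle as the attractor in the basin of $s'$.

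To make that identification, and to settle the transient-progeny bound at the same time, I would invoke covariance once more to obtain
\begin{equation*}
\phi^d(s')=\phi^d\bigl(f(s)\bigr)=f\bigl(\phi^d(s)\bigr)=f(t).
\end{equation*}
So $f(t)$ is reached from $s'$ after exactly $d$ steps of the dynamics. Because $f(t)$ is cyclic, the forward orbit of $s'$ has entered a cycle no later than step $d$; since the dynamics is deterministic and each basin contains a unique attractor (see~\S\ref{subsubsec:features}), this cycle is precisely the attractor governing $s'$, and its length is therefore $\ell'$. This establishes $\ell'\mid\ell$. Finally, since $\phi^d(s')$ already lies on a cycle, the minimal number $d'$ of steps carrying $s'$ into a cycle satisfies $d'\le d$, which is the required bound on the transient progeny.

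The computations themselves are one-liners, so the only delicate point is the last paragraph: arguing that the cycle containing $f(t)$ is genuinely the attractor of $s'$ rather than some unrelated cycle, so that the period $\ell'$ of $f(t)$ coincides with the length of the state $s'$ in the basin-of-attraction sense. The ingredients resolving this are the identity $\phi^d(s')=f(t)$ together with determinism, which guarantees one attractor per basin.
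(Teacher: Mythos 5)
Your proof is correct and follows essentially the same route as the paper's: both apply covariance (in the power form of Proposition~\ref{prop:n-covariant}) to the relation $\phi^{\ell+d}(s)=\phi^{d}(s)$ to conclude that $\phi^{d}\bigl(f(s)\bigr)$ is fixed by $\phi^{\ell}$, hence lies on a cycle of length $\ell'\mid\ell$, giving $d'\le d$ as well. Your version is in fact slightly more careful than the paper's, since you explicitly justify that the cycle containing $f(t)$ is the attractor of the basin of $s'$ (via $\phi^{d}(s')=f(t)$ and uniqueness of the attractor per basin), a point the paper leaves implicit.
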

\begin{proof}
Let~$s$ be a deterministic state of transient progeny $d$ and
of length $\ell$ in a discrete dynamical system $\left(S,\phi\right)$.
Then 
\begin{equation}
\phi^{\ell+d}\left(s\right)=\phi^{\ell}\left(\phi^{d}\left(s\right)\right)=\phi^{d}\left(s\right).
\end{equation}
Let $f:S\rightarrow S$
be covariant;
then
\begin{align}
\phi^{d}\left(f\left(s\right)\right)
=&f\left(\phi^{d}\left(s\right)\right)=
f\left(\phi^{\ell+d}\left(s\right)\right)
    \nonumber\\
=&\phi^{\ell+d}\left(f\left(s\right)\right).
\end{align}
This means that, after $\ell$ iterations, we come back to $\phi^{d}\left(f\left(s\right)\right)$,
which implies that $\phi^{d}\left(f\left(s\right)\right)$
lies in an attractor of length $\ell'$ that divides $\ell$.
This also means that $f\left(s\right)$ will be a state of transient progeny
 $d'\leq d$.
\end{proof}
\noindent Thus, transient states become ``less transient''
(in particular they become attractor states if $d'=0$).
\begin{remark}
If $d=0$, the state~$s$ is an attractor state, and Lemma~\ref{lem:transient necessary} tells us how any state evolves,
including attractor states,
under covariant operations. Specifically, the constraint  $0\leq d'\leq d=0$, implies 
$d'=0$.
If~$s$ is an attractor state,
then~$s'$ must be an attractor state too, whose length divides the length of the original attractor. In other words, attractor states are sent to attractor states.
\end{remark}

We can also say something about the ancestry of states: it increases
under free operations. Clearly, this statement is meaningful when
at least one of the states is transient.
\begin{lemma}
\label{lem:ancestry necessary}For every state~$s$ of a deterministic
dynamical system $\left(S,\phi\right)$, the ancestry of the successors of~$s$ does not decrease
under free operations.
Formally, if  $f:S\rightarrow S$ is covariant, then
\begin{equation}\label{eq:ancestry}
a\left(\phi^n\left(s\right)\right)\leq a\left(\phi^n\circ f\left(s\right)\right),
\end{equation}
for every $n\in\mathbb{N}$.
\end{lemma}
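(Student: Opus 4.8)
The plan is to combine the ``witness'' description of ancestry with covariance, pushing predecessors of $\phi^n(s)$ forward through $f$ to obtain predecessors of $\phi^n(f(s))$ of the same depth. Recall that $a\left(\phi^n(s)\right)$ is the largest $N$ (possibly $+\infty$) for which there exists a state $t$ with $\phi^N(t)=\phi^n(s)$. First I would record a monotonicity observation: if $t$ witnesses depth $N$, i.e.\ $\phi^N(t)=\phi^n(s)$, then for every $N'\le N$ the state $\phi^{N-N'}(t)$ witnesses depth $N'$. Consequently, proving the lemma reduces to showing that every depth $N$ achievable for $\phi^n(s)$ is also achievable for $\phi^n(f(s))$.

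The central step is the forward push. Given a witness $t$ with $\phi^N(t)=\phi^n(s)$, I would set $t'=f(t)$ and use Proposition~\ref{prop:n-covariant} (applied to both $\phi^N$ and $\phi^n$) to compute
\begin{equation}
\phi^N\left(f(t)\right)=f\left(\phi^N(t)\right)=f\left(\phi^n(s)\right)=\phi^n\left(f(s)\right).
\end{equation}
Thus $t'=f(t)$ witnesses depth $N$ for $\phi^n(f(s))$, giving $a\left(\phi^n(f(s))\right)\ge N$. Taking the supremum over all depths $N$ achievable for $\phi^n(s)$ then yields $a\left(\phi^n(s)\right)\le a\left(\phi^n\circ f(s)\right)$, which is the claim.

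The only delicate point is the case where $\phi^n(s)$ sits in an attractor, so $a\left(\phi^n(s)\right)=+\infty$ and no single maximal witness exists. This turns out not to be a genuine obstacle: witnesses for $\phi^n(s)$ exist at arbitrarily large depth $N$ (by going around the cycle), and the same computation forces $a\left(\phi^n(f(s))\right)\ge N$ for every $N$, hence $a\left(\phi^n(f(s))\right)=+\infty$. Equivalently, one may argue via Lemma~\ref{lem:transient necessary}: if $\phi^n(s)$ is an attractor state then $s$ has transient progeny at most $n$, whence $f(s)$ has progeny at most $n$, so $\phi^n(f(s))$ is again an attractor state of infinite ancestry. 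I expect the whole argument to be short; the main thing to get right is phrasing the finite and infinite cases uniformly through the single statement ``achievable depth is preserved'', which the covariance identity above supplies directly.
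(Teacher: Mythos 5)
Your proof is correct, and it reaches the conclusion by a cleaner route than the paper's, even though the central computation is shared. The forward push—taking a depth-$N$ witness $t$ with $\phi^N(t)=\phi^n(s)$ and using covariance twice to get $\phi^N\left(f(t)\right)=f\left(\phi^N(t)\right)=f\left(\phi^n(s)\right)=\phi^n\left(f(s)\right)$—is exactly the computation the paper performs, where the witness is called $s_n^*$ and $N=a\left(\phi^n(s)\right)$. The difference is in the surrounding structure. The paper splits into cases ($s$ an attractor state; $s$ transient with $n\geq d'$; $s$ transient with $n=0,\ldots,d'-1$) and dispatches the infinite-ancestry cases by invoking Lemma~\ref{lem:transient necessary}: since $f(s)$ has progeny $d'\leq d$, the right-hand side is $+\infty$ whenever it needs to be. Your supremum-over-achievable-depths formulation avoids this case analysis entirely: when $a\left(\phi^n(s)\right)=+\infty$ you produce witnesses of arbitrarily large depth by winding around the cycle, and the identical forward push applies to each of them. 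The concrete gain is logical independence—your proof of this lemma does not rely on Lemma~\ref{lem:transient necessary} at all (you cite it only as an optional shortcut), whereas in the paper that lemma is a genuine prerequisite. What the paper's organization buys instead is that the interplay between the progeny bound $d'\leq d$ and the ancestry bound is made explicit, mirroring how the two conditions are consumed jointly in Theorem~\ref{thm:necessary and sufficient}; but judged purely as a proof of this statement, yours is the leaner and more uniform argument.
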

\begin{proof}
If~$s$ is an attractor state, all its successors are attractor 
states too. This means, that for every $n\in\mathbb{N}$, $a\left(\phi^n\left(s\right)\right)=+\infty$.
By Lemma~\ref{lem:transient necessary}, $f\left(s\right)$ is an
attractor state, so, again its successors are attractor states.
Therefore, $a\left(\phi^n\circ f\left(s\right)\right)=+\infty$
for every $n\in\mathbb{N}$, satisfying the condition $a\left(\phi^n\left(s\right)\right)\leq a\left(\phi^n\circ f\left(s\right)\right)$
for every natural number $n\in \mathbb{N}$.

Suppose now~$s$ is a transient state, and let $d>0$ be its transient progeny. This means that, after $d$ steps,
the successors of~$s$ become attractor states. 
By Lemma~\ref{lem:transient necessary}, $f\left(s\right)$ is a state with transient progeny $d'\leq d$, which means that, after $d'\leq d$
steps the successors of $f\left(s\right)$ become attractor states.
Therefore, for $n\geq d'$, we have $a\left(\phi^n\circ f\left(s\right)\right)=+\infty$,
which clearly satisfies inequality~\eqref{eq:ancestry}.
Instead, for every $n=0,\ldots,d'-1$, there exists a state $s_n^*$ such that $\phi^n\left(s\right)=\phi^{a\left(\phi^n\left(s\right)\right)}\left(s_n^*\right)$.
By covariance of $f$,
\begin{align}
\phi^n\circ f\left(s\right)
=&f\left(\phi^n\left(s\right)\right)=f\left(\phi^{a\left(\phi^n\left(s\right)\right)}\left(s_n^*\right)\right)\nonumber\\
=&\phi^{a\left(\phi^n\left(s\right)\right)}\left(f\left(s_n^*\right)\right).
\end{align}
This shows that $f\left(s_n^*\right)$ is a predecessor of $\phi^n\circ f\left(s\right)$,
and that there are $a\left(\phi^n\left(s\right)\right)$ steps
from $f\left(s_n^*\right)$ to~$\phi^n\circ f\left(s\right)$.
Thus, Eq.~\eqref{eq:ancestry} holds
even for $n=0,\ldots,d'-1$. In conclusion, we proved that, for any
state $s$, Eq.~\eqref{eq:ancestry} holds
for every $n\in\mathbb{N}$.
\end{proof}
\noindent In words, a (transient) state becomes farther and
farther from its farthest predecessor.

In particular, in the two lemmas above we can take~$f$ to be the generator $\phi$ of the dynamics.
When $f=\phi$, $d'\leq d$, $\ell=\ell'$, and the ancestry
never decreases for all the successors of a state. Transitions from one state to another due to dynamics occur only inside a given basin of attraction. However, the statements of the two lemmas suggest that, when~$f$ is not the dynamic, it may be possible to jump 
between different basins of attraction, provided the constraints
are met, e.g.\ from an attractor of length 4 to an attractor of length 2. However, at the moment we do not know if this is the case because the two lemmas are only \emph{necessary} conditions, but a priori not sufficient. 
\subsubsection{Sufficient conditions~}

The conditions expressed in Lemmas~\ref{lem:transient necessary} and~\ref{lem:ancestry necessary} are necessary, and they tell us how deterministic states change under covariant operations. Now we prove that these conditions are sufficient too: if they are satisfied, we can conclude that a covariant map exists that sends a deterministic state~$s$ to~$s'$.

\begin{lemma}\label{lem:sufficient1}
Let~$s$ and~$s'$ be two deterministic
states of discrete dynamical system $\left(S,\phi\right)$, with~$s'$ an attractor state,
and $\ell,\ell'$ their respective  lengths.
If 
$\ell'\mid\ell$, there exists
a covariant operation converting~$s$ into~$s'$.
\end{lemma}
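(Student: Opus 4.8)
The plan is to exhibit an explicit covariant influence $f\colon S\to S$ with $f(s)=s'$ and verify the commutation condition of Definition~\ref{def:covariance} directly. The key structural fact I would use is that, as recalled in~\S\ref{subsubsec:features}, the basins of attraction are exactly the connected components of the dynamical graph, hence both forward- and backward-invariant with no edges running between them. This lets me build $f$ piecewise: on everything outside the basin~$B$ of~$s$ I would set $f=\mathds1$, which is covariant and, since no edge crosses between components, cannot interfere with the covariance check inside~$B$. The problem thus reduces to constructing a covariant map on~$B$ that lands in the $\ell'$-cycle of~$s'$ and sends $s\mapsto s'$.

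On~$B$ I would fix attractor coordinates $c_0,\dots,c_{\ell-1}$ with $\phi(c_i)=c_{i+1}$ (indices mod~$\ell$) and, by rotating the labels, arrange that~$s$ enters the attractor at $c_0$, i.e.\ $\phi^{d}(s)=c_0$ with $d$ the transient progeny of~$s$. I would label the target cycle $c'_0,\dots,c'_{\ell'-1}$ with $c'_0=s'$ and $\phi(c'_j)=c'_{j+1}$ (indices mod~$\ell'$); since~$s'$ is an attractor state, $d'=0$, so the target is a single cycle and any map into it is pinned down by one integer ``phase'' modulo~$\ell'$. For each $x\in B$, writing $d_x$ for its transient progeny and $c_{j_x}:=\phi^{d_x}(x)$ for the attractor state it enters, I would define $f(x):=c'_{(j_x-d_x+d)\bmod \ell'}$. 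Taking $x=s$ gives $j_s=0$, $d_s=d$, whence $f(s)=c'_0=s'$ immediately.

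The covariance condition $f\circ\phi=\phi\circ f$ reduces, via Proposition~\ref{prop:n-covariant}, to checking $f(\phi(x))=\phi(f(x))$ for every $x\in B$. Along a transient tail this is automatic: applying~$\phi$ leaves the entry point $c_{j_x}$ unchanged and lowers the progeny by one, so the quantity $j_x-d_x$ increases by one and the index of~$f$ advances by one, matching $\phi(c'_k)=c'_{k+1}$. The hard part---and the only place the hypothesis $\ell'\mid\ell$ is actually used---is the wraparound of the attractor itself: at $x=c_{\ell-1}$ one has $\phi(c_{\ell-1})=c_0$, and equating $f(c_0)=c'_{d}$ with $\phi(f(c_{\ell-1}))=c'_{\ell+d}$ forces $\ell\equiv 0\pmod{\ell'}$, which is precisely $\ell'\mid\ell$. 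Equivalently, this divisibility is exactly what makes the reduction of the index $j_x-d_x$ modulo~$\ell'$ consistent all the way around a cycle of length~$\ell$. Combining covariance on~$B$ with covariance of $\mathds1$ on the remaining components then yields a covariant influence on all of~$S$ with $f(s)=s'$, completing the argument.
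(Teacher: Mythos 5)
Your proposal is correct and is essentially the paper's own construction: the identity off the basin of~$s$, plus a map of that basin onto the cycle of~$s'$ pinned down by $f(s)=s'$, with the hypothesis $\ell'\mid\ell$ invoked exactly at the wraparound of the length-$\ell$ attractor. Your single formula $f(x)=c'_{(j_x-d_x+d)\bmod\ell'}$ produces the very same function that the paper builds in two stages (successors $s_n\mapsto\phi^n(s')$, then the remaining transient states via the distance $\delta(t;s_d)$, whose exponents agree modulo~$\ell'$ precisely because $\ell'\mid\ell$), so your uniform covariance check is a tidier packaging of the paper's case-by-case verification rather than a different argument.
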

\begin{proof}
To prove sufficiency, we must construct a covariant function mapping
$s$ into~$s'$.
To this end, we need to specify its action on all
deterministic states of $\left(S,\phi\right)$.
Now, for the states
$\{x\}$ \emph{not} in the basin of attraction of $s$, construct $f\left(x\right):=x$.
Such an~$f$ is covariant on these states by construction. Now we need to map the
basin of attraction of~$s$ to the basin of attraction of~$s'$ in a covariant way. The
starting point is to fix $f\left(s\right)=s'$, and then we reconstruct
the action of~$f$ on the other states of the basin of attraction
of~$s$.

Observe that~$s$ has $d+\ell$ successors
$s_n:=\phi^n\left(s\right)$
(including itself $s_0=s$), for $n=0,\ldots,d+\ell-1$, where $d$ is the transient progeny of~$s$. For these states~$\{s_n\}$, construct $f\left(s_n\right):=\phi^n\left(s'\right)$,
which yields in particular $f\left(s\right)=s'$.
Now, for $n=0,\ldots,d+\ell-2$,
one has
\begin{align}
\phi\circ&f\left(s_n\right)
=\phi\circ\phi^n\left(s'\right)=\phi^{n+1}\left(s'\right)\nonumber\\
=&f\left(s_{n+1}\right)=f\left(\phi^{n+1}\left(s\right)\right)=f\circ\phi\left(s_n\right),
\end{align}
and, for $n=d+\ell-1$,
\begin{equation}
\phi\circ f\left(s_{d+\ell-1}\right)=\phi\circ\phi^{d+\ell-1}\left(s'\right)=\phi^{d+\ell}\left(s'\right)
=\phi^{d}\left(s'\right)
\end{equation}
because~$s'$ is in an attractor of length $\ell'\mid\ell$.
On the other hand, 
\begin{align}
f\circ&\phi\left(s_{d+\ell-1}\right)=f\circ\phi\circ\phi^{d+\ell-1}\left(s\right)\nonumber\\
=&f\circ\phi^{d+\ell}\left(s\right)=f\circ\phi^{d}\left(s\right)=f\left(s_d\right)
=\phi^{d}\left(s'\right),
\end{align}
where we have used the fact that~$s$ is a state of transient progeny $d$
and length~$\ell$. We conclude that $\phi\circ f\left(s_{d+\ell-1}\right)=f\circ\phi\left(s_{d+\ell-1}\right)$.
In conclusion,~$f$ is covariant on every successor of~$s$.

Now we need to define~$f$ on all the other states in the same basin of
attraction as~$s$.
Note that every state in the same basin of attraction
as~$s$ is a predecessor of $s_d:=\phi^{d}\left(s\right)$ because
$s_d$ is in the attractor of length~$\ell$.
Let $\mathcal{S}$ be the set
of successors of $s$ (cf.\ also Fig.~\ref{fig:graphsets}).
\begin{equation}
\label{eq:setofsuccessors}
\mathcal{S}
=\left\{s_n:=\phi^n\left(s\right);
n=0,\ldots,d+\ell-1
\right\}.
\end{equation}
For the states~$t$ in the complement of $\mathcal{S}$,
if it is non-empty,
define $\delta\left(t; s_d\right)$ to be
the minimum number of steps to reach $s_d$ from $t$.
Note that these states are all transient because all the attractor states are
among the successors of~$s$. On the states in the (non-empty) complement
of $\mathcal{S}$, set~$f$ to be $f\left(t\right):=\phi^{d-\delta\left(t; s_d\right)}\left(s'\right)$, 
where the exponent is regarded to be modulo $\ell'$, where this time we pick a representative in the remainder classes modulo $\ell'$ that is between 0 and $\ell'-1$.

Let us check
that this definition makes~$f$ covariant. We have $\phi\circ f\left(t\right)=\phi^{\delta\left(t; s_d\right)+1}\left(s'\right)$.
Instead, to assess $f\circ\phi\left(t\right)$,
we must distinguish
two cases.
If $\phi\left(t\right)\notin \mathcal{S}$, we have
\begin{align}
f\circ\phi\left(t\right)
    =&\phi^{d-\delta\left(\phi\left(t\right); s_d\right)}\left(s'\right)
    \nonumber\\
    =&\phi^{d-\delta\left(t; s_d\right)+1}\left(s'\right),
\end{align}
where we have used the fact that $\delta\left(\phi\left(t\right); s_d\right)=\delta\left(t; s_d\right)-1$
for transient states $t$. Indeed, if $\phi^{k}\left(t\right)=s_d$,
then $\phi^{k-1}\left(\phi\left(t\right)\right)=s$ (and for transient
states $k>0$). Instead, if $\phi\left(t\right)\in \mathcal{S}$, then $\phi\left(t\right)=\phi^n\left(s\right)$,
for some $n\in\left\{ 0,\ldots,d+\ell-1\right\} $. Therefore, by
the above definition of $f$, $f\circ\phi\left(t\right)=\phi^n\left(s'\right)$.
Now, if $n\leq d$, we have $\delta\left(\phi\left(t\right); s_d\right)=d-n$,
whence $\delta\left(t; s_d\right)=d-n+1$. In this case
\begin{equation}
f\circ\phi\left(t\right)=\phi^n\left(s'\right)=\phi^{d-\delta\left(t; s_d\right)+1}\left(s'\right).
\end{equation}
Since~$s'$ is in an attractor of period $\ell'$, the exponent must
be regarded as modulo $\ell'$, as discussed above. On the other hand, if $n\in\left\{ d+1,\ldots,d+\ell-1\right\} $,
we have $\delta\left(\phi\left(t\right); s_d\right)=d-n$ (modulo
$\ell$, where for every remainder class modulo
$\ell$ we pick a representative between 0 and $\ell$), whence $\delta\left(t; s_d\right)=d-n+1$ (modulo~$\ell$, as before).
Again,
\begin{equation}
f\circ\phi\left(t\right)=\phi^n\left(s'\right)=\phi^{d-\delta\left(t; s_d\right)+1}\left(s'\right),
\end{equation}
where the expression at exponent is modulo~$\ell$, as above. However, since~$s'$ is in a
cycle of length $\ell'$, the exponent is actually modulo $\ell'$, in the sense above, because of the length $\ell'$ of the cycle. Given that $\ell'\mid \ell$, whether we take the exponent modulo $\ell$ or $\ell'$, the state $\phi^{d-\delta\left(t; s_d\right)+1}\left(s'\right)$ is the same. In conclusion,~$f$ is covariant on all states
in the complement of $\mathcal{S}$ too. This proves that there is a covariant influence
$f$ mapping~$s$ to~$s'$.
\end{proof}

With reference to Lemma~\ref{lem:sufficient1}, if $d$ and $d'$ denote the transient progeny of~$s$ and~$s'$ respectively, we have $d'=0$ because~$s'$ is an attractor state, and
$a\left(\phi^n\left(s'\right)\right)=+\infty$ for every $n\in\mathbb{N}$,
because all successors of~$s'$ are attractor states too.
Hence, note that this
situation can be viewed as a special case of the more general conditions $d'\leq d$ and $a\left(\phi^n\left(s'\right)\right)\geq a\left(\phi^n\left(s\right)\right)$
for $n=0,\ldots,d'-1$, which are examined for the case of transient states in the following lemma.

\begin{example}
\label{example:freeattractormapping4to2}
The example of how to construct
a free operation mapping an attractor of length 4 into an attractor
of length 2 is shown in Fig.~\ref{fig:freeopconstruction}.
\hfill
$\blacksquare$
\begin{figure}
\includegraphics[width=\columnwidth]{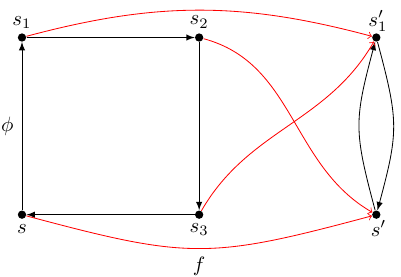}
\caption{%
How a covariant influence maps an attractor~$\{s,s_1,s_2,s_3\}$ of length 4 into an attractor~$\{s',s'_1\}$ of length 2.
Black arrows represent the action of~$\phi$,
and red arrows represent the action of the covariant influence~$f$. Here we have $f\left(s\right) = s'$. Then, $f\left(s_1\right)=\phi\circ f\left(s\right)=\phi\left(s'\right)=s_1'$. Similarly, $f\left(s_2\right)=\phi\circ f\left(s_1\right)=\phi\left(s_1'\right)=s'$ and $f\left(s_3\right)=\phi\circ f\left(s_2\right)=\phi\left(s'\right)=s_1'$.}
\label{fig:freeopconstruction}
\end{figure}
\end{example}

\begin{lemma}
\label{lem:sufficient2}Let~$s$ and~$s'$ be two deterministic
states of a deterministic dynamical systems $\left(S,\phi\right)$, with~$s'$ transient. If $d'\leq d$,
$\ell'\mid\ell$ and $a\left(\phi^n\left(s'\right)\right)\geq a\left(\phi^n\left(s\right)\right)$
for $n=0,\ldots,d'-1$, then there exists
a covariant influence converting~$s$ into~$s'$ 
\end{lemma}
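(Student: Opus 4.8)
The plan is to mimic the constructive strategy of Lemma~\ref{lem:sufficient1}, but now accounting for the fact that the target~$s'$ is transient, so that mapping into the target can run into ``dead ends''---states with no predecessors---which is precisely what the ancestry hypothesis~\eqref{eq:ancestry condition} rules out. As before, I set $f\left(x\right):=x$ for every state~$x$ outside the basin of attraction~$B$ of~$s$; since a basin is a connected component of the dynamical graph, it is both forward- and backward-closed under~$\phi$, so this choice is covariant and cannot interfere with the action of~$f$ on~$B$. It then remains to define a covariant~$f$ on~$B$ with $f\left(s\right)=s'$.

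First I would pin down~$f$ on the forward orbit of~$s$, namely $s_n:=\phi^n\left(s\right)$ for $n=0,\ldots,d+\ell-1$ (the tail of length~$d$ followed by the cycle of length~$\ell$), by setting $f\left(s_n\right):=\phi^n\left(s'\right)$. Covariance on the orbit is immediate, and well-definedness reduces to checking the only identifications, those on the cycle $s_n=s_{n+\ell}$ for $n\geq d$: here $\phi^n\left(s'\right)$ and $\phi^{n+\ell}\left(s'\right)$ both lie on the target cycle~$C'$ of length~$\ell'$ (because $n\geq d\geq d'$), and $\phi^{\ell}$ fixes them since $\ell'\mid\ell$. This is where the hypotheses $d'\leq d$ and $\ell'\mid\ell$ enter; in particular $d'\geq 1$ forces $d\geq 1$, so~$s$ is genuinely transient and has a nonempty tail.

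Next I would extend~$f$ to the remaining transient states of~$B$, which hang off the orbit in branches. Each such~$x$ meets the orbit for the first time at a unique point $s_{n\left(x\right)}$, at some depth $k\left(x\right)\geq 1$. The key sub-lemma I expect to need is purely about in-trees: \emph{a rooted in-tree of height~$h$ admits a $\phi$-commuting map, sending root to root, into any backward chain of length~$\geq h$}; concretely one collapses the whole branch onto a single longest backward chain by sending a vertex at depth~$k$ to the chain vertex at depth~$k$, which commutes with~$\phi$ because applying~$\phi$ decreases depth by exactly one along a tree. I would apply this to each branch at $s_{n\left(x\right)}$ with target a longest backward chain ending at $f\left(s_{n\left(x\right)}\right)=\phi^{n\left(x\right)}\left(s'\right)$, splitting into two cases: if $n\left(x\right)\geq d'$ then $\phi^{n\left(x\right)}\left(s'\right)$ is a cycle state of~$C'$, whose backward chains are unbounded (winding around~$C'$), so the collapse always succeeds; if $n\left(x\right)<d'$ then $\phi^{n\left(x\right)}\left(s'\right)$ is transient with $a\!\left(\phi^{n\left(x\right)}\left(s'\right)\right)\geq a\!\left(\phi^{n\left(x\right)}\left(s\right)\right)$ by~\eqref{eq:ancestry condition}, and since every branch height at $s_{n\left(x\right)}$ is at most $a\!\left(\phi^{n\left(x\right)}\left(s\right)\right)$, a long-enough backward chain exists. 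This is exactly why the ancestry condition is imposed only for $n=0,\ldots,d'-1$: for $n\geq d'$ the target is an attractor state of infinite ancestry and no constraint is needed.

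Finally I would verify covariance of the completed~$f$ by the usual one-step check $\phi\circ f=f\circ\phi$, distinguishing whether $\phi\left(x\right)$ stays within its branch (handled by the collapse being depth-respecting) or lands on the orbit point $s_{n\left(x\right)}$ (handled because $f\left(x\right)$ was chosen as a predecessor of $\phi^{n\left(x\right)}\left(s'\right)=f\left(s_{n\left(x\right)}\right)$). The main obstacle is the branch-collapsing sub-lemma together with verifying that the required backward chains exist exactly under~\eqref{eq:ancestry condition}; once this height-domination principle for in-trees is in hand, the remaining verifications are routine and parallel those already carried out in Lemma~\ref{lem:sufficient1}.
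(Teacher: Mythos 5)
Your construction is correct and is essentially the paper's own proof: your branches are exactly the paper's pruned predecessor sets $\widetilde{\mathcal{P}}_n$ (with the predecessors of $s$ itself as the branch at $s_0$), and collapsing a branch vertex at depth $k$ onto the $k$-th vertex of a backward chain ending at $\phi^n\left(s'\right)$ reproduces the paper's formulas $f\left(t\right)=\phi^{a\left(s_n\right)-\delta\left(t;s_n\right)}\left(s_n'^*\right)$ for $n<d'$ and $f\left(w\right)=\phi^{n-\delta\left(w;s_n\right)}\left(s'\right)$ taken modulo $\ell'$ for $d'\leq n\leq d$, including the wrap-around check on the cycle that uses $d'\leq d$ and $\ell'\mid\ell$. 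The only real difference is organizational: your in-tree ``height-domination'' sub-lemma packages, in one clean statement, the covariance verifications that the paper carries out inline case by case.
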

\begin{proof}
 Again, we have to construct a covariant function mapping
$s$ into $s'$, specifying its action on all
deterministic states of $\left(S,\phi\right)$. As in the proof of Lemma~\ref{lem:sufficient1}, for the states
$x$ \emph{not} in the basin of attraction of $s$, define $f\left(x\right):=x$. To construct the action of~$f$ on the states in the same basin of attraction as $s$, we proceed as in the proof of Lemma~\ref{lem:sufficient1}.

 The conditions $d'\leq d$ and $a\left(\phi^n\left(s'\right)\right)\geq a\left(\phi^n\left(s\right)\right)$
for $n=0,\ldots,d'-1$ imply that~$s$ is transient too. The situation is now 
more complicated than in the proof of Lemma~\ref{lem:sufficient1}, due to the presence of states in
the basin of attraction of~$s$ that are neither successors nor predecessors
of~$s$.

 Consider the set $\mathcal{S}$ of the $d+\ell$ successors of $s$
as in Eq.~\eqref{eq:setofsuccessors}.
As done
in the proof of Lemma~\ref{lem:sufficient1}, construct $f\left(s_n\right):=\phi^n\left(s'\right)$,
which yields $f\left(s\right)=s'$. Now, for $n=0,\ldots,d+\ell-2$,
one immediately proves the covariance of $f$, as seen in the proof of Lemma~\ref{lem:sufficient1}. Instead,
for $n=d+\ell-1$, one has
\begin{align}
\phi\circ f\left(s_{d+\ell-1}\right)
=&\phi\circ\phi^{d+\ell-1}\left(s'\right)
\nonumber\\
=&\phi^{\ell}\left(\phi^{d}\left(s'\right)\right)=\phi^{d}\left(s'\right),
\end{align}
because~$s'$ has transient progeny $d'\leq d$ and length
$\ell'\mid\ell$, so we know that $\phi^{d}\left(s'\right)$ is in
the attractor in the same basin of attraction as~$s'$. Now, using the fact that~$s$ is a transient state
of transient progeny $d$ and length~$\ell$, as above, we have $f\circ\phi\left(s_{d+\ell-1}\right)=\phi^{d}\left(s'\right)$, because $\phi\left(s_{d+\ell-1}\right)=s_d$
In conclusion,~$f$ is covariant on every successor of~$s$.\\

 Now consider the set of predecessors
of~$s$, namely,
\begin{equation}
\mathcal{P}_{0}=\left\{ t:\phi^{k}\left(t\right)=s,\textrm{ for some }k\in\mathbb{N}\right\},
\end{equation}
which contains~$s$ itself (obtained for $k=0$), as represented in Fig.~\ref{fig:graphsets}.

To respect the dynamics, these states must be mapped to predecessors
of~$s'$. We are only interested
in the states in $\mathcal{P}_{0}$ that are different from~$s$ (provided they exist). If they do not exist, we move to the next step, i.e.\ the predecessors of the successors of~$s$. Instead, if $\mathcal{P}_{0}\backslash\left\{ s\right\} $
is non-empty, for the states $t\in \mathcal{P}_{0}\backslash\left\{ s\right\} $,
let $\delta\left(t; s\right)$ be the number of steps to go from
$t$ to~$s$. We know that the maximum value $\delta\left(t; s\right)$
is exactly $a\left(s\right)$. Now, by hypothesis $a\left(s'\right)\geq a\left(s\right)$,
so it is possible to find a predecessor~$s'^*$ of~$s'$ such that
\begin{equation}
\label{eq:Ds'*s}
    \delta\left(s'^*; s\right)=a\left(s\right).
\end{equation}
Now, for $t\in \mathcal{P}_{0}\backslash\left\{ s\right\} $, construct
\begin{equation}\label{eq:predecessors of s}
    f\left(t\right):=\phi^{a\left(s\right)-\delta\left(t; s\right)}\left(s'^*\right).
\end{equation}
Note that the exponent is always non-negative because $\delta\left(t; s\right)\leq a\left(s\right)$.

 Let us show that Definition~\eqref{eq:predecessors of s} makes~$f$ covariant on $\mathcal{P}_{0}\backslash\left\{ s\right\} $ when such a set is non-empty.
We have
\begin{equation}\label{eq:predecessors of s1}
\phi\circ f\left(t\right)=\phi^{a\left(s\right)-\delta\left(t; s\right)+1}\left(s'^*\right).
\end{equation}
On the other hand, to calculate $f\circ\phi\left(t\right)$, we have
to distinguish two cases.
The first case is when $\phi\left(t\right)=s$, which means $\delta\left(t; s\right)=1$. In this case, we have $f\circ\phi\left(t\right)=s'$. Now, Eq.~\eqref{eq:predecessors of s1}
reads
\begin{align}
\phi\circ f\left(t\right)
=&\phi^{a\left(s\right)-\delta\left(t; s\right)+1}\left(s'^*\right)
=\phi^{a\left(s\right)-1+1}\left(s'^*\right)
\nonumber\\
=&\phi^{a\left(s\right)}\left(s'^*\right)
=s'.
\end{align}

 The second case is when $\phi\left(t\right)\neq s$.
Observe that $\phi\left(t\right)$
cannot be a state $s_n$ for $n>0$, i.e.\ a successor of~$s$. Indeed, if this were the case,
we would have $\phi\left(t\right)=\phi^n\left(s\right)$, for some
$n\geq1$. Since~$t$ is in $\mathcal{P}_{0}\backslash\left\{ s\right\} $,
we have $\phi^{k}\left(t\right)=s$, for some $k\geq1$. Combining
these two statements we obtain 
\begin{equation}
s=\phi^{k-1}\circ\phi\left(t\right)=\phi^{k-1}\circ\phi^n\left(s\right)=\phi^{k-1+n}\left(s\right).
\end{equation}
Since~$s$ is transient, this is possible iff $k-1+n=0$.
Since $n>0$ and $k\geq1$, this condition can never be satisfied.
Additionally, $\phi\left(t\right)$ can neither be one of the (possible)
states $w$ in the same basin of attraction as~$s$ that are neither successors
nor predecessors of~$s$. Indeed, if $\phi\left(t\right)$ were such
a $w$, we would have
\begin{equation}
s=\phi^{k-1}\circ\phi\left(t\right)=\phi^{k-1}\left(w\right),
\end{equation}
which would mean that $w$ is a predecessor of $s$, which is against
the hypothesis. Therefore, if $\phi\left(t\right)\neq s$, the only possibility is that $\phi\left(t\right)$ is a predecessor of~$s$. In this case, 
\begin{align}
f\circ\phi\left(t\right)=\phi^{a\left(s\right)-\delta\left(\phi\left(t\right); s\right)}\left(s'^*\right)
\nonumber\\
=\phi^{a\left(s\right)-\delta\left(t; s\right)+1}\left(s'^*\right),
\end{align}
All of this proves that the constructed~$f$ is covariant on $\mathcal{P}_{0}\backslash\left\{ s\right\} $.

 In the same basin of attraction as~$s$ there can be some states $w$
that are neither successors nor predecessors of~$s$. In any case, they
are predecessors of some of the successors of~$s$. Then, for every
successor $s_n$, with $n=1,\ldots,d+\ell-1$,
consider the set of its predecessors
\begin{equation}
\mathcal{P}_n=\left\{ t:\phi^{k}\left(t\right)=s_n,\textrm{ for some }k\in\mathbb{N}\right\} .
\end{equation}
Note that, for every $n=1,\ldots,d+\ell-1$, $\mathcal{P}_m\subseteq \mathcal{P}_n$
for $m<n$. Indeed, consider a state $t\in \mathcal{P}_m$, with $m<n$.
Then $\phi^{k}\left(t\right)=\phi^{m}\left(s\right)$, for some $k\in\mathbb{N}$.
However, as $m<n$, we can consider
\begin{align}
\phi^{n-m+k}\left(t\right)
=&\phi^{n-m}\circ\phi^{k}\left(t\right)
=\phi^{n-m}\circ\phi^{m}\left(s\right)
\nonumber\\
=&\phi^n\left(s\right).
\end{align}
Therefore, $t\in \mathcal{P}_n$, showing the inclusion $\mathcal{P}_m\subseteq \mathcal{P}_n$.
To avoid defining~$f$ redundantly on the sets $P_n$, for $n=1,\ldots,d+\ell-1$,
we have to prune the sets $\mathcal{P}_n$, getting rid of the successors
of~$s$ in $\mathcal{S}$~(\ref{eq:setofsuccessors}),
of the states in $\mathcal{P}_{0}$ (for which~$f$ has already
been defined), and taking care of the inclusions $\mathcal{P}_m\subseteq \mathcal{P}_n$
for $m<n$. To this end, we define the sets
\begin{equation}
\widetilde{\mathcal{P}}_n:=\mathcal{P}_n\backslash\left(\mathcal{P}_{n-1}\cup \mathcal{S}\right),
\end{equation}
for $n=1,\ldots,d+\ell-1$.
Now, we need to consider only the $\widetilde{\mathcal{P}}_n$'s
that are non-empty (see Fig~\ref{fig:graphsets}).

\begin{figure}
\includegraphics{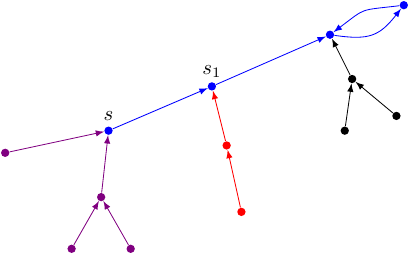}
\caption{\label{fig:graphsets}
A state $s$ in a dynamical graph, with $d=2$ and $\ell=2$. The set $\mathcal{S}$ of the $2+2=4$ successors of $s$ is represented in blue. The set $\mathcal{P}_0$ of the predecessors of $s$ is the purple set, along with $s$ itself. The set $\widetilde{P}_1$ is depicted in red, and represents a side chain that ends up in the main blue chain. The set $\mathcal{P}_1$, the set of predecessors of $s_1$, is made up of the purple states, of the red states and $s$ and $s_1$. By considering all sets $\widetilde{P}_n$ for $n=1,\dots,d+\ell-1$, in the proof of Lemma~\ref{lem:sufficient2} we take care of all collateral chains ending up in $\mathcal{S}$.}

\end{figure}

 Note that the maximum $n$ for which $\widetilde{\mathcal{P}}_n$
can be non-empty is $n=d$. Indeed, since, for $n\geq d$, $s_n$
is in the attractor, $\mathcal{P}_n$ contains all the states of the basin, and so does $\mathcal{P}_{n-1}$ if $n>d$,
so $\widetilde{\mathcal{P}}_n=\varnothing$ for every $n>d$. We distinguish
two cases: when $1\leq n<d'$, and when $d'\leq n\leq d$.

 In the first case, if $1\leq n<d'$,
for the states $w\in\widetilde{P}_n$, we know that $\delta\left(w; s_n\right)\leq a\left(\phi^n\left(s\right)\right)$.
By hypothesis, we know that $a\left(\phi^n\left(s'\right)\right)\geq a\left(\phi^n\left(s\right)\right)$,
so it is possible to find a predecessor $s_n'^*$ of $\phi^n\left(s'\right)$
such that
\begin{equation}
\label{eq:Dsn&phins'}
\delta\left(s_n'^*;\phi^n\left(s'\right)\right)=a\left(s_n\right).
\end{equation}
Therefore, if $w\in\widetilde{P}_n$ (with $n<d'$), we construct $f$
as
\begin{equation}
    f\left(w\right):=\phi^{a\left(s_n\right)-\delta\left(w; s_n\right)}\left(s_n'^*\right).
\end{equation}
Again, this function is well-defined because $\delta\left(w; s_n\right)\leq a\left(s_n\right)$.
Let us show that it is covariant.
To this end, we have \begin{equation}\label{eq: phi circ f}
    \phi\circ f\left(w\right)=\phi^{a\left(s_n\right)-\delta\left(w; s_n\right)+1}\left(s_n'^*\right).
\end{equation}
To assess $f\circ\phi\left(w\right)$, we need to distinguish two cases. The first is when $\phi\left(w\right)\in\widetilde{P}_n$. In this case,
\begin{align}\label{eq:f circ phi}
f\circ\phi\left(w\right)&=\phi^{a\left(s_n\right)-\delta\left(\phi\left(w\right); s_n\right)}\left(s_n'^*\right) \nonumber\\
&=\phi^{a\left(s_n\right)-\delta\left(w; s_n\right)+1}\left(s_n'^*\right).
\end{align}

 The other case is when $\phi\left(w\right)\notin\widetilde{P}_n$. We claim that 
the only possibility is $\phi\left(w\right)=s_n$. Indeed,
by a similar argument as above, $\phi\left(w\right)$ cannot be a
state  $s_m$ for $m>n$. To see why, suppose by contradiction
that $\phi\left(w\right)=\phi^{m}\left(s\right)$ for some $m>n$.
Since $w\in\widetilde{P}_n$, there exists $k\geq1$ such that $\phi^{k}\left(w\right)=\phi^n\left(s\right)$.
Then, combining these two properties
\begin{align}
\phi^n\left(s\right)&=\phi^{k-1}\circ\phi\left(w\right)=\phi^{k-1}\circ\phi^{m}\left(s\right)\nonumber\\
&=\phi^{k-1+m}\left(s\right).\label{eq:transient}
\end{align}
Here we assume $n<d'\leq d$;
therefore, $\phi^n\left(s\right)$
is still a transient state. Then Eq.~\eqref{eq:transient} is satisfied
iff $n=k-1+m$. This is equivalent to solving $\left(k-1\right)+\left(m-n\right)=0$.
Since $m>n$ and $k\geq1$, this expression can never vanish, so eq.~\eqref{eq:transient}
cannot be satisfied. Additionally, $\phi\left(w\right)$ cannot be in
$\mathcal{P}_{n-1}$, for otherwise there would exist $k'\geq0$ such that $\phi^{k'+1}\left(w\right)=s_{n-1}$,
contradicting the assumption that $w\notin \mathcal{P}_{n-1}$. Finally, $\phi\left(w\right)$
cannot be a (possible) state $w'$ that is neither a predecessor nor
a successor of $s_n$. Indeed, if this were the case, we would have
$\phi\left(w\right)=w',$which combined with $\phi^{k}\left(w\right)=s_n$, for some $k\geq 1$,
yields
\begin{equation}
s_n=\phi^{k-1}\circ\phi\left(w\right)=\phi^{k-1}\left(w'\right),
\end{equation}
which is against the hypothesis that $w'$ is not a predecessor of
$\phi^n\left(s\right)$. Therefore, we are only left with the possibility
$\phi\left(w\right)=s_n$.

 Now, if $\phi\left(w\right)=s_n$, we have $f\circ\phi\left(w\right)=\phi^n\left(s'\right)$. On the other hand, as $\phi\left(w\right)=s_n$,  we have $\delta\left(w; s_n\right)=1$. Thus, substituting into Eq.~\eqref{eq: phi circ f}
yields
\begin{equation}
\phi\circ f\left(w\right)=\phi^{a\left(s_n\right)-1+1}\left(s_n'^*\right)=\phi^n\left(s'\right).
\end{equation}
Here we have used Eq.~\eqref{eq:Dsn&phins'}, which states that $a\left(s_n\right)$ is the number of steps from $s_n'^*$ to $\phi^n\left(s'\right)$.
This shows covariance of~$f$ on $\widetilde{\mathcal{P}}_n$, for $n=1,\ldots,d'-1$.

 To conclude the proof, consider the case for which some $\widetilde{\mathcal{P}}_n$'s
are non-empty for $n=d',\ldots,d$.
For $w\in\widetilde{\mathcal{P}}_n$,
construct
\begin{equation}
f\left(w\right):=\phi^{n-\delta\left(w; s_n\right)}\left(s'\right),
\end{equation}
where the exponent is 
modulo $\ell'$, and we pick a representative of the remainder classes modulo $\ell'$ from $0$ to $\ell'-1$. Now we show that this definition makes~$f$ covariant
on $\widetilde{\mathcal{P}}_n$.
Clearly,
\begin{equation}
\phi\circ f\left(w\right)=\phi^{n-\delta\left(w; s_n\right)+1}\left(s'\right).
\end{equation}
To assess $f\circ\phi\left(w\right)$, as done above, we need to distinguish two cases. The first case is $\phi\left(w\right)\in\widetilde{\mathcal{P}}_n$,
in which case
\begin{align}\label{eq:f circ phi s'}
f\circ\phi\left(w\right)&=\phi^{n-\delta\left(\phi\left(w\right); s_n\right)}\left(s'\right)\nonumber\\
&=\phi^{n-\delta\left(w; s_n\right)+1}\left(s'\right).
\end{align}
This case shows that~$f$ is covariant when $\phi\left(w\right)\in\widetilde{\mathcal{P}}_n$, with $n=d',\dots,d$ now we proceed to examine the second case, which is when $\phi\left(w\right)\notin\widetilde{\mathcal{P}}_n$.

 To address the second case, i.e.\ $\phi\left(w\right)\notin\widetilde{\mathcal{P}}_n$, for $n=d',\dots,d$, we further distinguish two cases. If $n=d',\ldots,d-1$, $s_n=\phi^n\left(s\right)$ (i.e.\ we exclude $n=d$), then $s_n=\phi^n\left(s)\right)$ is a transient
state;
thus, as seen above, if $\phi\left(w\right)\notin\widetilde{\mathcal{P}}_n$,
then the only possibility is $\phi\left(w\right)=s_n$.
In this case, $f\circ\phi\left(w\right)=\phi^n\left(s'\right)$.
The fact  $\phi\left(w\right)=s_n$ implies that $\delta\left(w; s_n\right)=1$.
Then $f\circ\phi\left(w\right)$ (cf.\ Eq.~\eqref{eq:f circ phi s'}) takes the form
\begin{align}
f\circ\phi\left(w\right)&=\phi^{n-\delta\left(w; s_n\right)+1}\left(s'\right)\nonumber\\
&=\phi^{n-1+1}\left(s'\right)\nonumber\\
&=\phi^n\left(s'\right).
\end{align}
This concludes the proof of covariance when  $\phi\left(w\right)\notin\widetilde{\mathcal{P}}_n$ and $n=d',\ldots,d-1$.

 Now we tackle the final case, namely $\phi\left(w\right)\notin\widetilde{\mathcal{P}}_n$ and $n=d$. In this case, for $n=d$, $s_d$ is in the attractor, so $\phi\left(w\right)$
this time can be any of the states in the attractor.
Indeed Eq.~\eqref{eq:transient}, for $n=d$
is equivalent to $d\equiv k-1+m\mod\ell$, which has solutions
even for $m>d$ and $k\geq1$ (it is enough to take an $m>d$ such
that $m\equiv d-k+1\mod\ell$).
Therefore, we conclude that $\phi\left(w\right)=s_m$,
for some $m\in\left\{ d,\ldots,d+\ell-1\right\} $, which is any of the states in the attractor. In this case $f\circ\phi\left(w\right)=\phi^{m}\left(s'\right)$.
Now, $\delta\left(\phi\left(w\right); s_d\right)=d-m$ (modulo~$\ell$, where for every remainder class modulo
$\ell$ we pick a representative between 0 and $\ell$), which means $\delta\left(w; s_d\right)=d-m+1$
(modulo~$\ell$ as before).
Therefore,
\begin{equation}
f\circ\phi\left(w\right)=\phi^{m}\left(s'\right)=\phi^{d-\delta\left(w; s_d\right)+1}\left(s'\right),
\end{equation}
where the expression in the exponent is modulo~$\ell$, as above.  However, since~$s'$ is in a
cycle of length $\ell'$, the exponent is actually modulo $\ell'$, in the sense above, because of the length $\ell'$ of the cycle. Given that $\ell'\mid \ell$, whether we take the exponent modulo $\ell$ or $\ell'$, the state $\phi^{d-\delta\left(t; s_d\right)+1}\left(s'\right)$ is the same.
\end{proof}
\noindent
With Lemmas~\ref{lem:sufficient1} and~\ref{lem:sufficient2} we finally obtain Theorem~\ref{thm:necessary and sufficient}.

\subsection{The conversion problem: non-attractorness}
\label{subsec:nonattractorness}

 Now we analyze the conversion problem for discrete dynamical systems where randomness is allowed.
In this case, we call the resource ``non-attractorness'' as all valuable states (i.e.\ non-free) are non-uniform probability vectors;
i.e.\ they are are not associated with uniform probability distributions supported on attractor deterministic states.  We show that randomness acts as an activator of transitions between deterministic states: with stochastic covariant influences it is possible to jump from one deterministic state to another even if their length do not satisfy the constraints of Theorem~\ref{thm:necessary and sufficient}.

Given a stochastic dynamical system $\left(\mathfrak{S},\phi\right)$ and two stochastic states~$\bm{p}$ and $\bm{p}'$, we want to establish whether there exists a stochastic matrix~$F$ commuting with the matrix~$\Phi$ associated with $\phi$,  such that $\bm{p}'=F\bm{p}$. From a computational point of view, the solution to this problem can be achieved with a set of constrained linear equations. The fact that all constraints are expressed by linear equations makes the problem tractable. In fact, we can show that this is a linear decision problem (see Appendix~\ref{app:linear programming}).

We can still try to see if we can reduce the complexity of determining the solution to the conversion problem by reducing the number of linear conditions to impose. To achieve this, we can study if some constraints on the entries of the matrix~$F$ to be determined follow directly from general properties of commutation of~$F$ with some~$\Phi$, even without specifying what~$\Phi$ is actually like. For this reason, now we study how the covariance condition affects the entries of the most general square stochastic matrix $F$, which, according to Eq.~\eqref{eq:stochastic}, are the jumping probabilities between deterministic states.
For example, we can determine which entries of~$F$ are forced to be equal or to vanish
by the covariance condition.

In the stochastic case we are mainly interested in the conversion problem between probability vectors. However, the structure of matrices of covariant influences can also provide a new angle on transitions between deterministic states. Indeed, the entries of such matrices are transition probabilities between deterministic states. In deterministic dynamical systems, influences map a deterministic state~$s$ into one deterministic state~$s'$. In stochastic dynamical systems, in general influences will allow multiple transitions from a state $s$, each of which is weighted by a certain probability. In particular, we say that a transition from a deterministic~$s$ to a deterministic state $s'$ is allowed if the corresponding transition probability $p\left(s'\middle|s\right)$ is non-zero.
Therefore, in the light of Theorem~\ref{thm:necessary and sufficient}, we are interested in seeing if the presence of randomness activates some types of transitions between deterministic states that would otherwise be forbidden by the conditions of Theorem~\ref{thm:necessary and sufficient}. To answer this question, we need once more to understand which transition probabilities $p\left(i|j\right)$ are compatible with the covariance condition expressed in its most general form.
\begin{example}\label{ex:activated transitions}
 Consider a discrete dynamical system $\left(S,\phi\right)$ that features two cycles: one of length 2 and one of length~1 (i.e.\ a fixed point). Let us determine the constraints on a generic stochastic matrix acting on the probability vectors of that discrete dynamical system, arising from the commutation condition with the dynamical matrix~$\Phi$. We call $s_1$ and $s_2$ the two deterministic states in the cycle of length 2, and $s_3$ the fixed point. With this convention,  $p\left(i\middle|j\right)$ ($i,j\in\left\{1,2,3\right\}$) is the transition probability from the deterministic state $s_j$ to the deterministic state $s_i$.
The dynamical matrix is 
\begin{equation}
\Phi=\begin{pmatrix}
0 & 1 & 0\\
1 & 0 & 0\\
0 & 0 & 1
\end{pmatrix},
\end{equation}
where the first column tells us that $s_1$ gets mapped to $s_2$, the second column that $s_2$ gets mapped to $s_1$ (in agreement with the fact that we have a cycle of length 2), and the third column tells us that $s_3$ remains~$s_3$ because it is a fixed point.
Imposing the covariance condition $\left[F,\Phi\right]=0$ with a generic stochastic matrix $F$ yields new constraints on the entries of $F$:
\begin{align}
        &\begin{pmatrix}
    p\left(1\middle|1\right) & p\left(1\middle|2\right) & p\left(1\middle|3\right)\\
    p\left(2\middle|1\right) & p\left(2\middle|2\right) & p\left(2\middle|3\right)\\
    p\left(3\middle|1\right) & p\left(3\middle|2\right) & p\left(3\middle|3\right)
    \end{pmatrix}\begin{pmatrix}
0 & 1 & 0\\
1 & 0 & 0\\
0 & 0 & 1
\end{pmatrix}\nonumber\\&=\begin{pmatrix}
0 & 1 & 0\\
1 & 0 & 0\\
0 & 0 & 1
\end{pmatrix}\begin{pmatrix}
    p\left(1\middle|1\right) & p\left(1\middle|2\right) & p\left(1\middle|3\right)\\
    p\left(2\middle|1\right) & p\left(2\middle|2\right) & p\left(2\middle|3\right)\\
    p\left(3\middle|1\right) & p\left(3\middle|2\right) & p\left(3\middle|3\right)
    \end{pmatrix}.
\end{align}
This equality gives some indications on the relations between transition probabilities $p\left(i\middle|j\right)$. For transitions within deterministic states in the attractor of length 2, we have:
\begin{itemize}
\item $p\left(1\middle|1\right)=p\left(2\middle|2\right)$
\item $p\left(2\middle|1\right)=p\left(1\middle|2\right)$
\end{itemize}
There is no constraint on the transition probability $p\left(3\middle|3\right)$. Instead if we want to jump from the attractor of length 2 to the fixed point with a covariant influence, we can do so in such a way that $p\left(3\middle|1\right)=p\left(3\middle|2\right)$. Conversely, if we want to transition from the fixed point to the attractor of length 2 with a covariant influence, we can do so provided $p\left(1\middle|3\right)=p\left(2\middle|3\right)$.
Notice that a transition from a short attractor to a longer one is clearly was forbidden in the purely deterministic scenario described in Theorem~\ref{thm:necessary and sufficient}, so we can say that the presence of randomness activates some transitions and allows us to go against some of the rules prescribed for a dynamical system under a deterministic covariant influence.
\hfill
$\blacksquare$
\end{example}

Motivated by Example~\ref{ex:activated transitions}, we want to see how the presence of stochastic covariant influences affects the statement Theorem~\ref{thm:necessary and sufficient}. To this end, we focus on the columns of a stochastic matrix $F$ associated with a stochastic covariant influence $f$, where each column represents the evolution of a deterministic state (which can be transient or in an attractor) of the discrete dynamical system under $f$. 
This is stated in the following lemmas.
\begin{lemma}\label{lem:necessary d random}
Let $s$ and $s'$ be two deterministic states of a discrete dynamical
system $\left(S,\phi\right)$. If a transition occurs from $s$ to
$s'$ under a stochastic covariant influence, then $d'\leq d$.
\end{lemma}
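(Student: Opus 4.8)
The plan is to lift the deterministic argument of Lemma~\ref{lem:transient necessary} from states to probability vectors, exploiting that a stochastic covariant influence is a matrix $F$ with $[F,\Phi]=0$. First I would encode the hypothesis: ``a transition occurs from $s$ to $s'$'' means that, for the influence at hand with matrix $F$, the vector $q:=F\bm{e}_s$ has a strictly positive $s'$-entry, i.e.\ $\bm{e}_{s'}$ lies in the support of $q$. Since $q$ is the $s$-th column of $F$, we may write $q=\sum_{t}p(t\mid s)\,\bm{e}_t$ with non-negative coefficients, where $p(s'\mid s)>0$.

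Next I would use that $s$ has transient progeny $d$ and length $\ell$, so $\phi^{d+\ell}(s)=\phi^{d}(s)$, i.e.\ $\Phi^{d+\ell}\bm{e}_s=\Phi^{d}\bm{e}_s$. Invoking covariance $[F,\Phi]=0$, and hence $[F,\Phi^{n}]=0$ for all $n$ by Proposition~\ref{prop:n-covariant}, I would push this invariance through $F$:
\[
\Phi^{d+\ell}q=F\,\Phi^{d+\ell}\bm{e}_s=F\,\Phi^{d}\bm{e}_s=\Phi^{d}q .
\]
Thus $\Phi^{d}q$ is fixed by $\Phi^{\ell}$, which gives $\Phi^{d}q=\Phi^{d+k\ell}q$ for every $k\in\mathbb{N}$.

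The decisive step is then a support argument. By linearity,
\[
\Phi^{d}q=\sum_{t}p(t\mid s)\,\bm{e}_{\phi^{d}(t)},
\]
and because all coefficients are non-negative there is no cancellation, so the support of $\Phi^{d}q$ is exactly the image set $\{\phi^{d}(t):p(t\mid s)>0\}$. Now I would choose $k$ large enough that $d+k\ell$ exceeds the maximal transient progeny in $(S,\phi)$, so that $\phi^{d+k\ell}(t)$ is an attractor state for every $t$; hence $\Phi^{d+k\ell}q$ is supported entirely on cycle states. Equating $\Phi^{d}q=\Phi^{d+k\ell}q$ then forces every $\phi^{d}(t)$ with $p(t\mid s)>0$ to be a cycle state. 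In particular $\phi^{d}(s')$ lies in a cycle, which says precisely that the transient progeny of $s'$ satisfies $d'\le d$.

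I expect the only subtlety to be the no-cancellation claim in this last step: distinct deterministic states $t$ may collapse onto the same $\phi^{d}(t)$, so a naive term-by-term comparison of the two sums fails. I would therefore phrase the conclusion at the level of the set-image, using the strict positivity of the probabilities to guarantee that the support of $\Phi^{d}q$ is genuinely $\{\phi^{d}(t):p(t\mid s)>0\}$ (coinciding coefficients only add to a larger positive number, never to zero), and then comparing this support with that of $\Phi^{d+k\ell}q$. Everything else reduces to the commutation $[F,\Phi^{n}]=0$ and the finiteness of $S$, both already available.
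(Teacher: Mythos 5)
Your proof is correct and follows essentially the same route as the paper's: both use covariance to establish $\Phi^{\ell+d}F\bm{e}_s=\Phi^{d}F\bm{e}_s$, then exploit invariance under $\Phi^{\ell}$ together with long-time evolution (finiteness of $S$) and non-negativity of the entries to conclude that the column $F\bm{e}_s$ can place no probability mass on states whose progeny exceeds $d$. The only differences are presentational — you argue directly via the support of $\Phi^{d}F\bm{e}_s$ (correctly handling the no-cancellation point), whereas the paper argues by contradiction and tracks the entry $p\left(s'\middle|s\right)$ through an explicit basin-of-attraction block decomposition.
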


\begin{proof}
We prove the statement by contradiction: we will assume $d'>d$, and
we show that a stochastic transition from $s$ to $s'$ is not possible,
i.e.\ $p\left(s'|s\right)=0$. Let $\boldsymbol{e}_i$ be the vector
of the canonical basis associated with the state~$s$. Then, if $F$
is the matrix of a covariant influence, the vector $\boldsymbol{p}:=F\boldsymbol{e}_i$
is the $i$th column of $F$, whose entries are the transition probabilities
$p\left(t|s\right)$, where $t$ is a deterministic state. Among them we find $p\left(s'|s\right)$.
Now,
\begin{equation}
\Phi^{\ell+d}\boldsymbol{e}_i
=\Phi^{d}\boldsymbol{e}_i
\end{equation},
where $\ell$ is the length of $s$, and~$\Phi$ is the matrix associated
with the generator of the dynamics $\phi$.

Similarly to Lemma~\ref{lem:transient necessary},
let us consider $\Phi^{\ell+d}F\boldsymbol{e}_i$:
\[
\Phi^{d}F\boldsymbol{e}_i=F\Phi^{d}\boldsymbol{e}_i=F\Phi^{\ell+d}\boldsymbol{e}_i=\Phi^{\ell+d}F\boldsymbol{e}_i,
\]
where we have exploited the covariance of $F$. Then we have 
\begin{equation}
\Phi^{\ell+d}\boldsymbol{p}=\Phi^{\ell}\Phi^{d}\boldsymbol{p}=\Phi^{d}\boldsymbol{p}.\label{eq:Phi^d p}
\end{equation}
It is useful to define $\boldsymbol{q}:=\Phi^{d}\boldsymbol{p}$. With
this in mind, Eq.~\eqref{eq:Phi^d p} reads $\Phi^{\ell}\boldsymbol{q}=\boldsymbol{q}$.
Now we use a similar technique to the one used in the proof of Proposition~\ref{prop:free random states}:
we partition the entries of $\boldsymbol{p}$ and $\boldsymbol{q}$
into blocks corresponding to the basins of attraction:
\begin{align*}
\boldsymbol{p} & =\boldsymbol{b}_1\oplus\ldots\oplus\boldsymbol{b}_k\\
\boldsymbol{q} & =\boldsymbol{c}_1\oplus\ldots\oplus\boldsymbol{c}_k.
\end{align*}
Let us focus on the basin of attraction of the state $s'$, which,
without loss of generality, we can assume to be $\boldsymbol{b}_1$
and $\boldsymbol{c}_1$, respectively.

As done in Proposition~\ref{prop:free random states},
we divide $\boldsymbol{b}_1$ and $\boldsymbol{c}_1$ into the
attractor part and the transient part. Recall that $\boldsymbol{q}$
is obtained by evolving $\boldsymbol{p}$ through $d$ time steps;
as a result, the entries of $\boldsymbol{c}_1$ can be obtained
by suitably moving the entries of $\boldsymbol{b}_1$ (including
$p\left(s'|s\right)$) by $d$ steps towards the attractor part. Note
that, since $d'>d$, the term $p\left(s'|s\right)$ will be in the
transient part of $\boldsymbol{c}_1$.
\[
\boldsymbol{c}_1=\begin{pmatrix}\textrm{attractor}\\
\hline \textrm{transient}
\end{pmatrix}=\begin{pmatrix}\textrm{attractor}\\
\hline \vdots\\
p\left(s'|s\right)+\cdots\\
\vdots
\end{pmatrix}
\]
Now, we know that $\Phi^{\ell}\boldsymbol{q}=\boldsymbol{q}$. This
implies, in particular, that $\Phi^{m\ell}\boldsymbol{q}=\boldsymbol{q}$,
for any $m\in\mathbb{N}$. By taking~$m$ large enough, we can make
the transient part of $\boldsymbol{c}_1$ vanish (cf.\ Proposition~\ref{prop:free random states}):
\[
\Phi^{m\ell}\boldsymbol{c}_1=\begin{pmatrix}\textrm{attractor}\\
\hline 0\\
\vdots
\end{pmatrix}=\begin{pmatrix}\textrm{attractor}\\
\hline \vdots\\
p\left(s'|s\right)+\cdots\\
\vdots
\end{pmatrix}=\boldsymbol{c}_1.
\]
As all the terms in $\boldsymbol{c}_1$ are non-negative, this equality
implies that $p\left(s'|s\right)=0$, which contradicts the hypothesis.
\end{proof}
\noindent
This lemma tells us that randomness in the covariant influence does not allow us to overcome the rule $d'\leq d$ we derived for the fully deterministic case.

\begin{lemma}
Let $s$ and $s'$ be two deterministic states of a discrete dynamical
system $\left(S,\phi\right)$. If a transition occurs from $s$ to
$s'$ under a stochastic covariant influence, then
\[
a\left(\phi^{n}\left(s'\right)\right)\geq a\left(\phi^{n}\left(s\right)\right),
\]
for every $n\in\mathbb{N}$.
\end{lemma}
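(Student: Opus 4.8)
The plan is to prove this as the stochastic analogue of Lemma~\ref{lem:ancestry necessary}, reusing the column-tracking technique of Lemma~\ref{lem:necessary d random}. Let $\boldsymbol{e}_i$ be the canonical vector of~$s$ and let~$F$ be the stochastic matrix of the covariant influence, so that the column $\boldsymbol{p}:=F\boldsymbol{e}_i$ has entries $p\left(t|s\right)$, and by hypothesis its $s'$-entry satisfies $p\left(s'|s\right)>0$. The first step I would carry out is a \emph{reduction to a single pair of states for each~$n$}. Writing covariance as $\Phi^n F=F\Phi^n$, one gets $\Phi^n\boldsymbol{p}=F\Phi^n\boldsymbol{e}_i=F\boldsymbol{e}_{\phi^n\left(s\right)}$, so $\Phi^n\boldsymbol{p}$ is exactly the column of~$F$ indexed by $\phi^n\left(s\right)$. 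Since $\Phi^n$ has non-negative entries and carries the mass sitting at $s'$ onto $\phi^n\left(s'\right)$, the $\phi^n\left(s'\right)$-entry of this column is at least $p\left(s'|s\right)>0$; equivalently, $p\left(\phi^n\left(s'\right)|\phi^n\left(s\right)\right)>0$, i.e.\ the transition $\phi^n\left(s\right)\to\phi^n\left(s'\right)$ also occurs with positive probability. Hence it suffices, for each fixed~$n$, to compare the ancestries of the two states $\phi^n\left(s\right)$ and $\phi^n\left(s'\right)$ that are joined by a positive transition.

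Next I would split into two cases. If $\phi^n\left(s\right)$ is an attractor state, then $a\left(\phi^n\left(s\right)\right)=+\infty$ and $n\geq d$; since $d'\leq d$ by Lemma~\ref{lem:necessary d random}, the state $\phi^n\left(s'\right)$ is an attractor state as well, so $a\left(\phi^n\left(s'\right)\right)=+\infty$ and the inequality holds with equality. The substantive case is when $\phi^n\left(s\right)$ is transient, so that $a:=a\left(\phi^n\left(s\right)\right)$ is finite. Here I would pick a predecessor~$t$ witnessing the ancestry, i.e.\ a deterministic state with $\phi^a\left(t\right)=\phi^n\left(s\right)$.

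The key computation is then to apply the influence to this witness~$t$ rather than to~$s$ directly. Covariance gives $\Phi^a\left(F\boldsymbol{e}_t\right)=F\Phi^a\boldsymbol{e}_t=F\boldsymbol{e}_{\phi^n\left(s\right)}$, whose $\phi^n\left(s'\right)$-entry equals $p\left(\phi^n\left(s'\right)|\phi^n\left(s\right)\right)>0$ by the reduction above. On the other hand, the $\phi^n\left(s'\right)$-entry of $\Phi^a\left(F\boldsymbol{e}_t\right)$ equals $\sum_{u:\,\phi^a\left(u\right)=\phi^n\left(s'\right)}\left(F\boldsymbol{e}_t\right)_u$. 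A strictly positive sum of non-negative terms forces at least one nonzero summand, so there exists a deterministic state~$u$ with $\phi^a\left(u\right)=\phi^n\left(s'\right)$. This exhibits a predecessor of $\phi^n\left(s'\right)$ at distance~$a$, whence $a\left(\phi^n\left(s'\right)\right)\geq a=a\left(\phi^n\left(s\right)\right)$, as required.

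The part I expect to be the main obstacle---or rather the main insight---is choosing the right column to analyse: not $F\boldsymbol{e}_s$ but $F\boldsymbol{e}_t$ for the ancestry witness~$t$, so that evolving forward by $\Phi^a$ lands precisely on the $\phi^n\left(s\right)$-column, and the non-negativity of the stochastic-matrix entries converts ``positive transition probability at $\phi^n\left(s'\right)$'' into ``existence of a distance-$a$ predecessor of $\phi^n\left(s'\right)$''. Everything else is bookkeeping: the reduction in the first paragraph and the non-negative decomposition of a matrix-vector entry, while the infinite-ancestry (attractor) case needs no new argument, being disposed of by the already-established bound $d'\leq d$.
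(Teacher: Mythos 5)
Your proof is correct, and it is in essence the contrapositive reorganization of the paper's argument, with noticeably leaner bookkeeping. The paper proceeds by contradiction: assuming $a\left(\phi^{n}\left(s'\right)\right)<a\left(\phi^{n}\left(s\right)\right)$, it first shows $p\left(\phi^{n}\left(s'\right)\middle|\phi^{n}\left(s\right)\right)=0$ by writing $F\Phi^{n}\bm{e}_i=\Phi^{a\left(\phi^{n}\left(s\right)\right)}F\bm{e}_{i_{n}^{*}}$ (the same ancestry witness you call $t$), partitioning vectors into basin-of-attraction blocks, ordering entries along the chain from the farthest predecessor of $\phi^{n}\left(s'\right)$, and arguing that enough applications of $\Phi$ zero out the relevant entry; it then shows $p\left(\phi^{n}\left(s'\right)\middle|\phi^{n}\left(s\right)\right)=0$ implies $p\left(s'\middle|s\right)=0$ via $F\Phi^{n}\bm{e}_i=\Phi^{n}F\bm{e}_i$ and non-negativity. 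Your two stages are exactly the contrapositives of these: your reduction ($p\left(s'\middle|s\right)>0\Rightarrow p\left(\phi^{n}\left(s'\right)\middle|\phi^{n}\left(s\right)\right)>0$) mirrors the paper's second stage, and your witness computation ($p\left(\phi^{n}\left(s'\right)\middle|\phi^{n}\left(s\right)\right)>0\Rightarrow$ existence of a distance-$a$ predecessor of $\phi^{n}\left(s'\right)$) mirrors the first. What your version buys is twofold: it is direct rather than by contradiction, and it replaces the paper's basin decomposition and entry-ordering machinery by the single identity $\left(\Phi^{a}\bm{v}\right)_j=\sum_{u\in\left(\phi^{a}\right)^{-1}\left(j\right)}v_u$, so that positivity of a sum of non-negative terms immediately forces the preimage set to be non-empty — which is precisely the ancestry bound. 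What the paper's version makes slightly more visible is \emph{why} the entry vanishes when the ancestry inequality fails (the preimage under $\phi^{a\left(\phi^{n}\left(s\right)\right)}$ is empty when $a\left(\phi^{n}\left(s\right)\right)$ exceeds the maximal chain length into $\phi^{n}\left(s'\right)$), but this is the same observation read in the opposite logical direction. Your handling of the attractor case via Lemma~\ref{lem:necessary d random} also matches the paper's treatment (it invokes that lemma at the analogous point), and your invocation is legitimate since that lemma is established first.
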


\begin{proof}
Again, we prove the lemma by contradiction: we will assume there exists
an $n\in\mathbb{N}$ for which $a\left(\phi^{n}\left(s'\right)\right)<a\left(\phi^{n}\left(s\right)\right)$
and we show that a stochastic transition from $s$ to $s'$ is not
possible, i.e.\ $p\left(s'|s\right)=0$. Note that this inequality
implies that $\phi^{n}\left(s'\right)$ is a transient state, which
also means that $s'$ is transient. If $s$ is an attractor state,
then a transition from $s$ to $s'$ cannot occur due to Lemma~\ref{lem:necessary d random},
because the transient progeny of $s'$ is strictly greater than the
transient progeny of~$s$.

We are left with the case when $s$ and $s'$ are both transient.
Let $\boldsymbol{e}_i$ be the vector of the canonical basis associated
with the state~$s$. Then consider the vector $F\Phi^{n}\boldsymbol{e}_i$,
where $F$ is the matrix of a covariant influence. The entries of
$F\Phi^{n}\boldsymbol{e}_i$ are the transition probabilities from
$\phi^{n}\left(s\right)$, among which we find $p\left(\phi^{n}\left(s'\right)|\phi^{n}\left(s\right)\right)$.
As a first step, we will show that $p\left(\phi^{n}\left(s'\right)|\phi^{n}\left(s\right)\right)=0$.
Now, recall we can always find a deterministic state $s_{n}^{*}$
such that $\phi^{n}\left(s\right)=\phi^{a\left(\phi^{n}\left(s\right)\right)}\left(s_{n}^{*}\right)$.
If $\boldsymbol{e}_{i_{n}^{*}}$ is the vector of the canonical basis
associated with the state $s_{n}^{*}$, then we can write $\Phi^{n}\boldsymbol{e}_i=\Phi^{a\left(\phi^{n}\left(s\right)\right)}\boldsymbol{e}_{i_{n}^{*}}$.
Then,
\[
F\Phi^{n}\boldsymbol{e}_i=F\Phi^{a\left(\phi^{n}\left(s\right)\right)}\boldsymbol{e}_{i_{n}^{*}}=\Phi^{a\left(\phi^{n}\left(s\right)\right)}F\boldsymbol{e}_{i_{n}^{*}},
\]
where we have used the covariance of $F$. Now, as done in the proof
of Lemma~\ref{lem:necessary d random}, we partition $F\Phi^{n}\boldsymbol{e}_i$
and $F\boldsymbol{e}_{i_{n}^{*}}$ into basins of attraction.
\begin{align*}
F\Phi^{n}\boldsymbol{e}_i & =\boldsymbol{b}_1\oplus\ldots\oplus\boldsymbol{b}_k\\
F\boldsymbol{e}_{i_{n}^{*}} & =\boldsymbol{c}_1\oplus\ldots\oplus\boldsymbol{c}_k.
\end{align*}
Let us focus on the basin of attraction of the state $s'$, which,
without loss of generality, we can assume to be $\boldsymbol{b}_1$
and $\boldsymbol{c}_1$, respectively. Specifically, let $s_{n}'^{*}$
be the state such that $\phi^{n}\left(s'\right)=\phi^{a\left(\phi^{n}\left(s'\right)\right)}\left(s_{n}'^{*}\right)$,
which is in the same basin of attraction as $\phi^{n}\left(s'\right)$ and~$s'$.
Let us list the entries of $\boldsymbol{b}_1$ and $\boldsymbol{c}_1$
starting from $s_{n}'^{*}$, then moving to $\phi\left(s_{n}'^{*}\right)$,
and so on to the other successors of $s'^{*}$, until we come to $\phi^{n}\left(s'\right)$
after $a\left(\phi^{n}\left(s'\right)\right)$ steps. This means that
the entry relative to $\phi^{n}\left(s'\right)$ is located $a\left(\phi^{n}\left(s'\right)\right)$
below the first row:
\[
\boldsymbol{b}_1=\begin{pmatrix}p\left(s_{n}'^{*}|\phi^{n}\left(s\right)\right)\\
\vdots\\
p\left(\phi^{n}\left(s'\right)|\phi^{n}\left(s\right)\right)\\
\vdots
\end{pmatrix}\qquad\boldsymbol{c}_1=\begin{pmatrix}p\left(s_{n}'^{*}|s_{n}^{*}\right)\\
\vdots\\
p\left(\phi^{n}\left(s'\right)|s_{n}^{*}\right)\\
\vdots
\end{pmatrix}.
\]
Now, in the equality $\boldsymbol{b}_1=\Phi^{a\left(\phi^{n}\left(s\right)\right)}\boldsymbol{c}_1$,
the first $a\left(\phi^{n}\left(s'\right)\right)+1$ entries of the
vector at the right-hand side are zero:
\[
\boldsymbol{b}_1=\begin{pmatrix}p\left(s_{n}'^{*}|\phi^{n}\left(s\right)\right)\\
\vdots\\
p\left(\phi^{n}\left(s'\right)|\phi^{n}\left(s\right)\right)\\
\vdots
\end{pmatrix}=\begin{pmatrix}0\\
\vdots\\
0\\
\hline \textrm{non-zero}
\end{pmatrix}=\Phi^{a\left(\phi^{n}\left(s\right)\right)}\boldsymbol{c}_1.
\]
The reason for the presence of $a\left(\phi^{n}\left(s'\right)\right)+1$
zeros is that $s_{n}'^{*}$ is a farthest predecessor of $\phi^{n}\left(s'\right)$,
so at each step (i.e.\ at each power of~$\Phi$) we progressively
remove the transition probabilities associated with the successors
of $s_{n}'^{*}$. Therefore, $\Phi$ removes the probability associated
with $s_{n}'^{*}$, $\Phi^{2}$ removes the probability associated
with $\phi\left(s_{n}'^{*}\right)$, and so on, until we come to $\Phi^{a\left(\phi^{n}\left(s'\right)\right)+1}$,
which removes the probability associated with $\phi^{a\left(\phi^{n}\left(s'\right)\right)}\left(s_{n}'^{*}\right)=\phi^{n}\left(s'\right)$.
The fact that $a\left(\phi^{n}\left(s\right)\right)>a\left(\phi^{n}\left(s'\right)\right)$
guarantees that we are always able to remove the probability associated
with $\phi^{n}\left(s'\right)$. Therefore, we conclude that $p\left(\phi^{n}\left(s'\right)|\phi^{n}\left(s\right)\right)=0$.

Now we will show that $p\left(\phi^{n}\left(s'\right)|\phi^{n}\left(s\right)\right)=0$
implies $p\left(s'|s\right)=0$. Now, $p\left(\phi^{n}\left(s'\right)|\phi^{n}\left(s\right)\right)$
is an entry of the vector $F\Phi^{n}\boldsymbol{e}_i$. By covariance,
$F\Phi^{n}\boldsymbol{e}_i=\Phi^{n}F\boldsymbol{e}_i$. Let us
partition $F\boldsymbol{e}_i$ into basins of attraction:
\[
F\boldsymbol{e}_i=\boldsymbol{d}_1\oplus\ldots\oplus\boldsymbol{d}_k,
\]
where we assume that $\boldsymbol{d}_1$ is the basin of attraction
of~$s'$. This time, we list the entries of $\boldsymbol{b}_1$
and $\boldsymbol{d}_1$ starting from $s'^{*}$, where $s'^{*}$
is a deterministic state such that $s'=\phi^{a\left(s'\right)}\left(s'^{*}\right)$,
similarly to what we have done above. Then we have $\boldsymbol{b}_1=\Phi^{n}\boldsymbol{d}_1$.
In this way, due to the presence of $\Phi^n$, the term arising from $p\left(s'|s\right)$ will be in
the $n+a\left(s'\right)+1$ entry of $\Phi^{n}\boldsymbol{d}_1$.
On the other hand, the $n+a\left(s'\right)+1$ entry of $\boldsymbol{b}_1$
is
\begin{align*}
p\left(\phi^{n+a\left(s'\right)}\left(s'^{*}\right)|\phi^{n}\left(s\right)\right) & =p\left(\phi^{n}\circ\phi^{a\left(s'\right)}\left(s'^{*}\right)|\phi^{n}\left(s\right)\right)\\
 & =p\left(\phi^{n}\left(s'\right)|\phi^{n}\left(s\right)\right)\\
 & =0.
\end{align*}
Then the $n+a\left(s'\right)+1$ entry of $\boldsymbol{d}_1$ must
vanish; because it is a sum of non-negative terms (including $p\left(s'|s\right)$),
each term must vanish, from which we finally obtain that $p\left(s'|s\right)=0$.
\end{proof}

We summarize the previous lemmas with the following theorem.
\begin{theorem}
\label{thm:stochastic conversions} Let~$s$ and~$s'$ be two deterministic
states of a discrete dynamical system $\left(S,\phi\right)$. Then a transition from~$s$ to $s'$ is allowed under stochastic covariant influences only if
\begin{align}
\label{eq:condcovinflstoch}
&d'\leq d,\\
&a\left(\phi^n\left(s'\right)\right)\geq a\left(\phi^n\left(s\right)\right)\label{eq:ancestry condition stoch}
\end{align}
for $n\in \mathbb{N}$.
\end{theorem}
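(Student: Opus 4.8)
The plan is to recognise that Theorem~\ref{thm:stochastic conversions} is a direct repackaging of the two lemmas just established, so no new machinery is needed. A transition from~$s$ to~$s'$ being \emph{allowed} under a stochastic covariant influence means, by the definition given just before Lemma~\ref{lem:necessary d random}, that there is a covariant stochastic matrix~$F$ with $p\left(s'\middle|s\right)\neq 0$. I would therefore simply fix such an~$F$ and read off the two conclusions.

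First I would invoke Lemma~\ref{lem:necessary d random}, which under exactly this hypothesis yields $d'\leq d$, establishing~\eqref{eq:condcovinflstoch}. Then I would invoke the immediately preceding (ancestry) lemma, whose hypothesis is again that a transition from~$s$ to~$s'$ occurs under a stochastic covariant influence and whose conclusion is $a\left(\phi^n\left(s'\right)\right)\geq a\left(\phi^n\left(s\right)\right)$ for every $n\in\mathbb{N}$, establishing~\eqref{eq:ancestry condition stoch}. Since both lemmas take the same single hypothesis, their conjunction is precisely the ``only if'' statement of the theorem.

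There is essentially no obstacle here: all the genuine work---propagating the vanishing of transition probabilities through the block decomposition into basins of attraction, and letting powers $\Phi^{m\ell}$ wash out the transient part---was already carried out inside the two lemmas. The one point worth flagging is the index range: the ancestry lemma is stated for all $n\in\mathbb{N}$, whereas the deterministic counterpart~\eqref{eq:ancestry condition} ran only over $n=0,\ldots,d'-1$. This is harmless, since for $n\geq d'$ the state $\phi^n\left(s'\right)$ lies in an attractor and has $a\left(\phi^n\left(s'\right)\right)=+\infty$, so the inequality holds trivially; the two formulations carry the same content.

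Finally, I would stress in a closing remark what the theorem does \emph{not} assert, to forestall confusion with Theorem~\ref{thm:necessary and sufficient}: the divisibility condition $\ell'\mid\ell$ is absent, and the conditions are only necessary, not sufficient. This is the whole point of the stochastic setting---randomness activates transitions (as in Example~\ref{ex:activated transitions}) that the deterministic divisibility rule would forbid---so one should resist the temptation to seek a converse here.
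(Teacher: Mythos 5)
Your proposal is correct and takes essentially the same approach as the paper, which introduces this theorem with the words ``We summarise the previous lemmas with the following theorem''---that is, the proof is exactly the conjunction of Lemma~\ref{lem:necessary d random} and the ancestry lemma under their common hypothesis, as you do. Your closing remarks on the trivial extension of the index range to all $n\in\mathbb{N}$ and on the deliberate absence of the divisibility condition $\ell'\mid\ell$ are likewise consistent with the paper's discussion following the theorem.
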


Notice that the divisibility constraint~\eqref{eq:condcovinfl} is no longer present in the statement of Theorem~\ref{thm:stochastic conversions}. Indeed, Example~\ref{ex:activated transitions} showed a concrete case where stochastic covariant influences overcome the divisibility constraint coming from deterministic covariant influences.

\section{Application to discrete logistic map}
\label{sec:app}
We now investigate how our results translate into a popular case of discrete dynamical system,
specifically, the discrete logistic map (DLM)~\cite{May1976}, even though it does not have a finite number of states, in contrast with the discrete dynamical systems we have studied so far.
The DLM
originated in the context of studying biological systems not reaching a steady state,
and the logistic map shows,
within a simple mathematical model,
not only is a steady state not achieved for some parameters but also that chaos emerges,
that a geometric convergence to chaos via period doubling occurs,
and period-three cycles~\cite{May1976,Logistic-bifurcation}.

\subsection{Mathematical model}
\label{subsec:mathmod}
We employ the standard mathematical description for the DLM~\cite{May1976}. We can think of the DLM as representing the evolution of a certain population. To achieve a compact treatment of the DLM, it is convenient to phrase it by introducing a variable~$x$ representing a fraction of the population with respect to the saturation level, i.e.\ the maximum population for the system. 
The DLM equation is
\begin{equation}
\label{eq:DLM}
x\gets rx(1-x),\;
r\in[0,4],\,
x\in[0,1]
\end{equation}
for~$\gets$ referring to the recursive replacement of~$x$
by its new value at each successive generation. Here~$r$ can be interpreted as a fertility rate.
For sufficiently low~$r$,
i.e., $r\leq 1$,
this recursive map yields the asymptotic value~$x_\infty=0$,
which denotes the steady-state solution and extinction.
However, larger~$r$
delivers more complicated solutions in the asymptotic long-time limit.

For $r>1$,
the DLM~(\ref{eq:DLM})
exhibits the phenomenon of period doubling until the onset of chaos at $r_\text{ch}\approx3.56995$.
Increasing~$r$
leads to bifurcating the steady state,
first into period~2,
then period~4 and so on
(powers of two).
The spacing between doubling,
with respect to~$r$,
asymptotically approaches a geometric progression with this progression quantified by Feigenbaum's number~$4.66920$.

The onset of chaos starts at~$r_\text{ch}$ and is characterized by exponentially sensitive dependence on initial conditions.
Although chaos dominates for $r\gtrapprox r_\text{ch}$,
these chaotic regions are punctuated by islands of stability including a period-three cycle,
whose existence actually guarantees the presence of chaos~\cite{LY75}.

We introduce an influence by modifying the recursion relation~\eqref{eq:DLM} to the form 
\begin{equation}
\label{eq:covinflrecursion}
x\gets rx(1-x)+f(x),
\end{equation}
where $f(x)$ is any real function of $x$. Introducing our notation, as $\phi$ evolves the discrete dynamical system for one time step, in the case of the DLM~\eqref{eq:DLM},
we have that $\phi\left(x\right)=rx\left(1-x\right)$.
Instead, the influence $f$ is given by the additive term added to the DLM. Imposing covariance amounts to imposing
$f\circ \phi = \phi \circ f$. We note that, if $f$ is a covariant influence, then $f^n$ is also a covariant influence, where by $f^n$ we mean $\underbrace{f\circ \dots \circ f}_{n \textrm{ times}}$. This can be easily proven by induction. To find examples of covariant influences from which to take powers and generate other covariant influences, we start by examining the case of a quadratic influence, i.e.\ of .e. $f(x)$ is of the form
\begin{equation}
    f\left(x\right)=ax^2 + bx +c,
\end{equation}
where $a,b,c\in\mathbb{R}$.

\subsection{Covariant influence for the DLM}
Now we explain how to impose covariance on the DLM, and we describe the situations that arise in this setting. Specifically, we will see that an influence of the form $f\left(x\right)=ax^2 + bx +c$ is covariant only for a very restricted set of values for its parameters. Moreover, we see that covariant polynomial influences of higher degrees can be obtained by taking iterates of the quadratic covariant term.

To impose covariance, we need to calculate $f\circ \phi$ and $\phi \circ f$ separately and then impose equality between these two expressions. Specifically,
\[
f\circ \phi = a\left(rx\left(1-x\right)\right)^2+brx\left(1-x\right)+c
\]
and
\[
\phi \circ f=r\left(ax^2+bx+c\right)\left(1-ax^2-bx-c\right).
\]
Imposing $f\circ \phi = \phi \circ f$ we get that $a,b,c$ can only take the following values.
We find that~$c$ must always be zero,
meaning that the covariant influence is forbidden from applying a constant shift.
Given that~$c\equiv0$,
two solutions arise:
either $a=0$ and $b=1$,
in which case the influence is strictly linear in~$x$,
or $a=-r$ and $b=r$,
in which case the influence is strictly quadratic.

Given that the only covariant influence of polynomial degree one is $x\gets x$,
which is an identity map,
and the case of polynomial degree two yields a rescaled logistic map,
we investigate a higher polynomial degree. In this case, a natural choice of covariant influence is obtained by considering the $n$th iterate of the logistic map, i.e.\ $f=\phi^n$.

\subsection{Discussion of `influenced' logistic map}
\label{subsec:discinfllogmap}

The linear case is not a DLM, as the recursive relation is
\begin{equation}
\label{eq:linearcase}
x \gets rx\left(1-x\right)+x,
\end{equation}
which is not of the form in Eq.~\eqref{eq:DLM}. Imposing the constraint that $x\in\left[0,1\right]$ at all generations, as it should be, given that $x$ represents a population fraction, we find that the new recursive relation is well-posed if and only if $r\in\left[0,1\right]$. In this regime, the uninfluenced DLM yields population extinction for $r\in\left[0,1\right]$. The linear term can be interpreted as proportional immigration term that is proportional to the population fraction. An easy analysis of the convergence of the new recursive relation yields that the presence of the linear migration terms always leads to population saturation (unless, of course, if the initial population is zero). With this interpretation, we can understand why the DLM with the covariant linear term is well posed only for $r\in\left[0,1\right]$: the population decay, as prescribed by the uninfluenced DLM, balances the proportional immigration kick, making $x$ not exceed the maximum population.

After straightforward calculations, in the case of a quadratic covariant influence, the conditions
\begin{equation}
a=-r,\,b=r,\,c=0,
\end{equation}
make Eq.~\eqref{eq:covinflrecursion} into the form \begin{equation}
x\gets 2rx(1-x),
\end{equation}
which just yields a rescaling of the original DLM by a factor 2. This leads to a redefinition of $r\in\left[0,2\right]$ in order to keep $x\in\left[0,1\right]$.
In this case, the uninfluenced part $\phi$ and the influence $f$ coincide, so $\phi$ and $f$ actually describe the same evolution. We can interpret this situation as a splitting of the original population obeying the DLM into two halves. As~$x$ is still constrained to be in $\left[0,1\right]$ and~$r$ can be interpreted as the fertility rate of the population~\cite{May1976}, splitting the population into two means that both~$x$ and~$r$ must be renormalized against the reduced size of the population. In particular, after the splitting into two halves, $r$ becomes $r':=\nicefrac{r}{2}$. Therefore, when considering the evolution of the whole population, it is natural that the fertility rate will be twice the renormalized fertility rate of each half; in symbols $r=2r'$.

In the case of higher-degree polynomial influences the recursive relation is of the form
\begin{equation}
\label{eq:higherdegree}
    x \gets rx\left(1-x\right)+\bigcirc^nrx\left(1-x\right)
\end{equation}
with~$\bigcirc^n$ the $n$th-order composition of Eq.~\eqref{eq:DLM} with itself~$n$ times.
We can interpret this expression as the presence of immigration that evolves exactly as the original system but $n$ times as fast. Notice that this general case subsumes the two previous cases, for $n=0$ and $n=1$, respectively. In this approach, we need to take care that $x\in\left[0,1\right]$, being $x$ a population fraction. If $r$ is small, the uninfluenced evolution will lead to a decay in fertility rate, and the immigration term, which evolves $n$ times as fast, will produce an even smaller contribution, thus ensuring that $x$ is always in the interval $\left[0,1\right]$.  

Suppose an initial population of $x=\nicefrac12$,
which means half the saturation value.
If $r=\nicefrac34$,
then the sequence of population fractions is $\nicefrac12\to\nicefrac3{16}$.
With proportional immigration~\eqref{eq:linearcase},
we instead obtain $\nicefrac12\to\nicefrac{11}{16}$, which highlights the role of immigration in reaching a higher population fraction. In this case, the population fraction will reach a steady state of 1 due to the contribution from immigration.
In the quadratic case, assuming again $r=\nicefrac34$, the sequence of population fractions is $\nicefrac12\to\nicefrac3{8}$, which can be immediately seen as twice as fast the uninfluenced logistic evolution, approaching the 0 steady state.
As an instance of higher-degree covariant influence, we consider the case $n=2$. In this case, the immigration term produces a contribution that decays faster than the decay induced by the uninfluenced logistic map. Specifically, starting from $\nicefrac12$ we reach $\nicefrac{309}{1024}$ in the next iteration. If we continue the iterations, we see that the population fraction approaches the steady state of 0.1975,
which corresponds to the intersection between $x\gets x$ and the function on the right-hand side of~$\gets$ in Eq.~\eqref{eq:higherdegree} for $n=2$.
Thus, our numerical illustration shows that the evolution leads to a steady-state population for $r=\nicefrac34$ in each of the $n\in\{0,1,2\}$ instances,
and this steady-state population agrees with the theory expounded above.

\section{Discussion}
\label{sec:discussion}
We study discrete dynamical systems over finite sets, which arise in several situations in different disciplines. In discrete dynamical systems states evolve in discrete steps according to a deterministic rule. Since we assume that the set of states is finite, after a finite time, the states must repeat themselves, ending up in a cycle. Discrete dynamical systems come in two flavors. The first is purely deterministic, where we know the initial state perfectly, and we are able to predict the evolution of each state with certainty. The second incorporates some randomness, as we do not know the initial state precisely, but only a probability distribution over possible initial states. Despite the randomness in the states, the evolution remains deterministic: if we know the initial state perfectly, we will still be able to predict its evolution with certainty.

Randomness in the initial state arises to model our ignorance about the exact initial state. This happens because the precision with which we probe a system is normally limited, and an exact description of the initial state is thus not achievable or highly impractical. In such situations, we can still describe the system by shifting from the evolution of states to the evolution of probability distributions over states.

As no systems are truly isolated, we incorporate the environment in the description of a discrete dynamical system, where the environment is intended as everything that is outside the dynamical system. In particular, the environment disrupts the normal evolution of the discrete dynamical system, and it can do so in two ways: either deterministically, or randomly. In some cases, the external influence occurs over a longer time scale than the time scale of evolution of a dynamical system. In such situations, the dynamical system has time to adapt to the external influence, so that the influence itself does not become too disruptive, but it interplays nicely with the dynamic of the system. Such an influence is called covariant, and it is the subject of our analysis.

We examine the effect of a covariant influence on a discrete dynamical system with multiple attractors and an environment enacting a deterministic covariant influence. Without the covariant influence, any state will eventually reach an attractor. Turning on the covariant influence allows hopping between different attractors, where we show that hopping from one attractor to another one is possible if and only if the length of the final attractor divides the length of the initial one. Thus, covariant influences shorten the length of attractors, so that the system spontaneously reaches a state of ``minimal evolution''. From the perspective of resource theories, a resource is something that goes against the spontaneous evolution induced by covariant influences. For this reason, the resource into play in this setting is called ``evolutionarity''.
From a biological point of view, this means that a system could be too rigid to adapt to the changes coming from the environment.

On the other hand, if the environment acts on a dynamical system in a complex way, its covariant influence must be described by a stochastic function. The effect of such an influence is always to allow hopping between different attractors, but unlike its deterministic version, now every deterministic state of the system can jump probabilistically to another deterministic state, where each transition is weighted by a certain probability, in contrast to what happened for deterministic influences. As usual, in the long run, the effect of the stochastic covariant influence is to send a deterministic state to a state in an attractor, where all states in the attractor have the same probability to be reached. Here, again, a resource is something that goes against the spontaneous evolution induced by covariant influences. Hence, ``non-attractorness'' is the resource that steers the probability distribution away from either being only supported on attractors or not being uniform on the states of an attractor. Such a resource makes deterministic states abandon their attractor and jump either temporarily to a transient state, or to a state of another attractor; viz.\ they behave against what covariant influences enforce.  This time, there are no constraints on the length of the target attractor, which means that covariant influences can also increase the length of attractor.  From a biological point of view, this means that a system now can be adaptable to diverse changes induced by the environment.

The results of this article are derived using the general and powerful framework of resource theories, which has been used extensively in quantum information and foundations. For the first time, we exported resource theories outside the domain of physics. Besides the conceptual clarity resource theories brought to the study of discrete dynamical systems and their influences, resource theories also inspired our questions and informed our approach to finding an answer to those questions. For example, the question of how covariant influences affect the evolution of a discrete dynamical systems can be phrased as the conversion problem in resource theories, which in turn can be solved using resource monotones. Such monotones give rise precisely to the constraints we found for the behavior of covariant influences discussed above.

\section{Conclusions}
\label{sec:conclusions}
We introduced the notion of covariant influence, which is the backbone of our analysis. The notion of covariant influence is expressed in terms of the commutation of the influence $f$ with the influenced dynamics $\phi$.
The uninfluenced dynamics is in general not reversible.

We developed the theory of discrete dynamical systems under covariant influences, giving a  mathematical characterization of which transitions can take place in a discrete dynamical system due to the presence of a covariant influence. Specifically, we analyzed two situations: one where there is no randomness, and the other where the covariant influence is stochastic. These two situations are qualitatively and quantitatively different. For the case of the lack of randomness, we derived necessary and sufficient conditions for the evolution of states under covariant influences (Theorem~\ref{thm:necessary and sufficient}). In this case, the covariant influence allows jumping between attractors whose final length divides the initial length. On the other hand, we showed that randomness permits transitions that violate this divisibility rule, and we were able to give some constraints on the entries of a generic stochastic matrix representing a covariant influence. However, understanding if a stochastic state can be converted into another with a covariant influence is in general a rather complicated problem, which can be formulated as a linear decision problem.

The underpinning of our results is the framework of resource theories, which is a powerful tool developed in quantum information. In this work, for the first time, we exported it to another area of science, and we showed how it can be applied to the study of dynamical systems, which are widespread in science. The application of resource theories is the mathematical foundation that allowed us to split the evolution of 
a discrete dynamical system into an influence and an uninfluenced part, in the same spirit one does perturbation theory, but without the restriction that the influence be a small contribution.

Resource theories guarantee the solidity of our results, as they are expressed with a logical and rigorous mathematical framework that can be formalized using category theory. This ensures the validity of our findings for all discrete dynamical systems, as long as the covariance condition holds. The level of generality we obtained is particularly valuable, in that it allows one to understand the behavior of any discrete dynamical system under a covariant influence, without needing to know the exact details of the evolution of that dynamical system. Our results are in the form of laws that can be tested in concrete dynamical systems, paving the way for further analysis, even from an experimental perspective.

There are open questions left for future work. One direction involves formulating a generalized version of the covariant condition, perhaps involving powers of $\phi$, and analyzing its consequences. Another direction further explores the action of stochastic covariant influences. For example, knowing the dynamical graph of a discrete dynamical system, we would like to be able to write the form of a stochastic covariant matrix immediately, just by looking at the dynamical graph. In particular, it is interesting to characterise the transition probabilities between states in attractors of different lengths. Indeed, we saw in the deterministic case that transitions under a covariant influence are permitted only if the length of the final attractor divides the length of the initial one. In the stochastic case, on the other hand, such transitions are allowed, but we may still observe some special behaviour in the transition probabilities when the final state is in an attractor whose length divides the length of attractor in which the initial state is. A third direction involves studying applications of our framework based on covariant influences to concrete settings, such as genetic regulatory networks.

\acknowledgments
We acknowledge the traditional owners of the land on which this work was undertaken at the University of Calgary: the Treaty 7 First Nations.
CMS acknowledges support from the Pacific Institute for the Mathematical
Sciences (PIMS) and from a Faculty of Science Grand Challenge award
at the University of Calgary. CMS thanks Stefano Gogioso for insightful
conversations,
and we appreciate Eduardo Miguel Mart\'{i}nez Garc\'{i}a for computing iterations of modified logistic map.
This work has been supported by NSERC.
\appendix
\section{Conversion problem: evolutionarity between different dynamical systems}\label{app:different dynamical}
In~\S\ref{subsec:conversion} we presented necessary and sufficient conditions for the conversion of a state $s$ into of a state $s'$ of the same dynamical system $\left(S,\phi\right)$. Now we generalize this situation by considering the conversion of a state $s$ of $\left(S,\phi\right)$ into a state $s'$ of $\left(S',\phi'\right)$ under any covariant map $f:S\to S'$.

As to \emph{necessary} conditions, notice that the proofs of Lemmas~\ref{lem:transient necessary}  and~\ref{lem:ancestry necessary} can be easily extended to the case when $s\in S$ and $s'\in S'$: it is enough to insert a prime where needed; therefore Lemmas~\ref{lem:transient necessary}  and~\ref{lem:ancestry necessary} still hold when there are two dynamical systems involved. In fact, in some cases, we can use Lemma~\ref{lem:transient necessary} to show that \emph{no} free operations exist
between  $\left(S,\phi\right)$ and $\left(S',\phi'\right)$, as shown in the following
example.
\begin{example}
\label{exa:no free}Suppose $\left(S,\phi\right)$ is a discrete dynamical system
with only one attractor of length 3, and $\left(S',\phi'\right)$
is a discrete dynamical system with only one attractor of length 2. If there
were free operations $f:S\rightarrow S'$,
we would map states in the only attractor of $\left(S,\phi\right)$
to states in the only attractor of $\left(S',\phi'\right)$, but the length
of the attractor of $\left(S',\phi'\right)$ does \emph{not}
divide the length of the attractor of $\left(S,\phi\right)$.
Therefore, no covariant operations from~$S$ to
$S'$ exist, in agreement with Lemma~\ref{lem:transient necessary}. For the same reason, neither
do free operations from~$S'$ to $S$ exist.

Note that the possible presence of transient states associated with
the attractors does not affect this result at all. This shows how
strong the condition on divisibility of Lemma~\ref{lem:transient necessary} is.
\hfill
$\blacksquare$
\end{example}

 Now, let us turn to \emph{sufficient} conditions when $\left(S,\phi\right)$ is not necessarily equal to $\left(S,\phi'\right)$.
In this case, given two states~$s\in S$  
and~$s'\in S'$, the two conditions $d'\leq d$
and $\ell'\mid\ell$ in Theorem~\ref{thm:necessary and sufficient} no longer suffice as shown in the following
example.
\begin{example}
\label{ex:differentnetworks}
 Let $\left(S,\phi\right)$ be a discrete dynamical system with one attractor
of length 3 and one of length 4. Let $\left(S',\phi'\right)$
be a discrete dynamical system with only one attractor of length 2. Let~$s$ be a state in the attractor of
length 4 of $\left(S,\phi\right)$, and let~$s'$ be a state
in the attractor of length 2 of $\left(S',\phi'\right)$. $s$
and~$s'$ respect the conditions $d'\leq d$ and $\ell'\mid\ell$,
but a transition is not possible between them because covariant maps do not
exist between $\left(S,\phi\right)$ and $\left(S',\phi'\right)$ because of incommensurate attractor lengths.
Indeed, if they did exist, we must map the attractor of length 3 in $\left(S,\phi\right)$
to an attractor with a length that divides 3, but such an attractor does not exist in $\left(S',\phi'\right)$.
\end{example}

 Example~\ref{ex:differentnetworks} shows that,
when $\left(S,\phi\right)\neq\left(S',\phi'\right)$,
the conversion problem between~$s$ and~$s'$ does not depend
only on~$s$ and $s'$, but involves the other deterministic states. The reason is the lack of covariant maps from
one dynamical system to the other. Here we provide a necessary and sufficient
condition for the existence of covariant maps between two different
discrete dynamical systems.
\begin{proposition}
\label{prop:existence free operations} Given two discrete dynamical systems,
$\left(S,\phi\right)$ and $\left(S',\phi'\right)$,
there exist covariant maps $f:S\rightarrow S'$
iff,
for every attractor of $\left(S,\phi\right)$
of length $\ell$,
there exists an attractor of $\left(S',\phi'\right)$
of length $\ell'\mid\ell$.
\end{proposition}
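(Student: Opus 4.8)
The plan is to prove the two implications separately. Necessity is the quick direction and follows from covariance on cycles; sufficiency is the substantive direction and is a basin-by-basin adaptation of the construction in Lemma~\ref{lem:sufficient1}. For necessity I would assume a covariant $f:S\to S'$ and fix any attractor $C$ of $\left(S,\phi\right)$ of length $\ell$, choosing $s\in C$ so that $\phi^\ell\left(s\right)=s$. Proposition~\ref{prop:n-covariant} with $n=\ell$ then gives $\phi'^\ell\left(f\left(s\right)\right)=f\left(\phi^\ell\left(s\right)\right)=f\left(s\right)$, so $f\left(s\right)$ is a periodic point of $\phi'$ whose period divides $\ell$; hence it lies in an attractor of $\left(S',\phi'\right)$ of length $\ell'\mid\ell$, exactly as claimed.

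For sufficiency I would use that the dynamical graph of $\left(S,\phi\right)$ partitions into basins of attraction $B_1,\dots,B_k$, each forward-invariant under $\phi$ and each carrying an attractor of some length $\ell_i$. The organising observation is that the covariance equation $f\circ\phi=\phi'\circ f$ is \emph{local} to basins: because $\phi$ maps $B_i$ into itself, for $x\in B_i$ both $f\left(x\right)$ and $f\left(\phi\left(x\right)\right)$ are governed solely by the restriction $f|_{B_i}$, so it suffices to build a covariant map on each basin separately and then glue. For a fixed $B_i$, the hypothesis provides an attractor of $\left(S',\phi'\right)$ of length $\ell'_i\mid\ell_i$; I would take an attractor state $s\in B_i$ (so its transient progeny is $0$) together with an attractor state $s'$ of the chosen target, and invoke the construction in the proof of Lemma~\ref{lem:sufficient1} to map all of $B_i$ covariantly into the basin of $s'$. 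The sole step there that assumed $S=S'$ is the clause $f\left(x\right):=x$ for states outside the basin of $s$; restricting to $B_i$ removes that clause, while the definitions on the successors of $s$ and on the side-chains feeding the attractor carry over verbatim, with exponents read modulo $\ell'_i$.

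The main obstacle I anticipate is not any individual computation, since those are inherited from Lemma~\ref{lem:sufficient1}, but rather arguing cleanly that the basinwise pieces assemble into a map that is covariant on all of $S$. This rests entirely on forward-invariance of the $B_i$, which ensures no $\phi$-orbit crosses from one basin to another and so reduces global covariance to covariance on each basin. A secondary point worth stating explicitly is that, in the necessity argument, ``$f\left(s\right)$ has period dividing $\ell$'' already yields an attractor of length $\ell'\mid\ell$, since the minimal period of a periodic point divides every period of that point.
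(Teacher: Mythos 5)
Your proposal is correct and takes essentially the same route as the paper: necessity via the covariance-on-cycles argument (which is exactly the content of Lemma~\ref{lem:transient necessary}, extended to two dynamical systems, as the paper invokes), and sufficiency by applying the construction of Lemma~\ref{lem:sufficient1} basin by basin with $d=0$ and gluing the pieces. Your explicit justification of the gluing step via forward-invariance of the basins is a detail the paper leaves implicit, but it is not a different approach.
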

\begin{proof}
 Necessity follows from Lemma~\ref{lem:transient necessary}.
To prove sufficiency, we need to show that, under the hypotheses of
the theorem, it is possible to construct a covariant $f:S\rightarrow S'$.
To this end, for every attractor of length $\ell$ in $\left(S,\phi\right)$,
let us pick an attractor state $s$, and a state~$s'$ in an attractor
of $\left(S',\phi'\right)$ whose length $\ell'$ divides
$\ell$. Then map all the states in the basin of attraction of
$s$ to states in the basin of attraction of $s'$, following the
procedure outlined in the proof of Lemma~\ref{lem:sufficient1} (in this case $d=0$). This yields a covariant map on all the
states in the basin of attraction of~$s$. Repeating this procedure
for all the attractors of $\left(S,\phi\right)$, we can construct
a covariant map~$f$ on all states of $\left(S,\phi\right)$.
\end{proof}
Now we can provide necessary and sufficient conditions for the conversion of a deterministic state in $\left(S,\phi\right)$ into a state of $\left(S',\phi'\right)$ with covariant maps.
\begin{theorem}
\label{thm:necessary and sufficient general} Let~$s$ be a deterministic
state of a discrete dynamical system $\left(S,\phi\right)$ and $s'$
be a deterministic state of a discrete dynamical system $\left(S',\phi'\right)$.
Then there exists a covariant map $f:S \rightarrow S'$
converting~$s$ into~$s'$ iff
\begin{enumerate}
\item there exist covariant  maps from $\left(S,\phi\right)$ to
$\left(S',\phi'\right)$;
\item $\ell'\mid\ell$;
\item $d'\leq d$;
\item $a\left(\phi'^n\left(s'\right)\right)\geq a\left(\phi^n\left(s\right)\right)$
for $n=0,\ldots,d'-1$.
\end{enumerate}
\end{theorem}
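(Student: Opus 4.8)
The plan is to split the argument into necessity and sufficiency, leveraging the single-system Theorem~\ref{thm:necessary and sufficient}, its constituent lemmas, and Proposition~\ref{prop:existence free operations}, the only genuinely new ingredient being the treatment of basins of attraction of $\left(S,\phi\right)$ other than the one containing $s$.

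For necessity, condition~1 is immediate: a covariant map converting $s$ into $s'$ is in particular a covariant map from $\left(S,\phi\right)$ to $\left(S',\phi'\right)$. As remarked at the start of this appendix, Lemmas~\ref{lem:transient necessary} and~\ref{lem:ancestry necessary} hold verbatim for two distinct dynamical systems once primes are inserted where appropriate. Applying the primed Lemma~\ref{lem:transient necessary} to the covariant $f$ with $f\left(s\right)=s'$ yields conditions~2 and~3, while the primed Lemma~\ref{lem:ancestry necessary}, with $\phi^n\circ f\left(s\right)=\phi'^n\left(s'\right)$, gives $a\left(\phi^n\left(s\right)\right)\leq a\left(\phi'^n\left(s'\right)\right)$ for every $n\in\mathbb{N}$, which contains condition~4 as the special case $n=0,\ldots,d'-1$.

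For sufficiency, I would build the covariant map $f$ basin by basin, exploiting the fact that each basin of attraction of $\left(S,\phi\right)$ is forward-invariant under $\phi$ and may be mapped independently into $\left(S',\phi'\right)$, so that local covariance on each basin assembles into global covariance. On the basin of attraction of $s$ itself, conditions~2,~3 and~4 are precisely the hypotheses of Lemmas~\ref{lem:sufficient1} and~\ref{lem:sufficient2}; the constructions given there refer only to $s$, $s'$, their lengths, progenies and ancestries, and the internal structure of the two basins, so they carry over unchanged to the two-system setting (again inserting primes), producing a covariant assignment on the basin of $s$ that sends $s$ to $s'$. For every remaining basin of $\left(S,\phi\right)$, of length $\ell''$ say, condition~1 together with Proposition~\ref{prop:existence free operations} guarantees an attractor of $\left(S',\phi'\right)$ of length dividing $\ell''$; I would then map that basin covariantly using the construction of Lemma~\ref{lem:sufficient1} with $d=0$, exactly as in the proof of Proposition~\ref{prop:existence free operations}. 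Collecting these per-basin maps yields a single covariant $f:S\rightarrow S'$ with $f\left(s\right)=s'$.

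The main obstacle is the sufficiency direction, and specifically the role of condition~1. In the single-system Theorem~\ref{thm:necessary and sufficient} the basins other than that of $s$ were handled trivially by the identity, but here the identity is unavailable since the codomain is a different system; condition~1 is exactly what replaces it, ensuring every such basin admits a covariant target. I expect the only delicate point is confirming that the independently defined pieces do not conflict and that covariance, which is a purely local commutation condition checked on each state together with its image under $\phi$ and $\phi'$, indeed follows from covariance on each basin separately; this is routine once one observes that $\phi$ never maps one basin into another.
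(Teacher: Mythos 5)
Your proposal is correct and follows essentially the same route as the paper: necessity from the primed versions of Lemmas~\ref{lem:transient necessary} and~\ref{lem:ancestry necessary} (with condition~1 trivially witnessed by $f$ itself), and sufficiency by mapping the basin of $s$ via the constructions of Lemmas~\ref{lem:sufficient1} and~\ref{lem:sufficient2} and handling the remaining basins via Proposition~\ref{prop:existence free operations}. Your explicit remark that condition~1 is what replaces the identity map used for the other basins in the single-system case is exactly the point the paper leaves implicit.
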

\begin{proof}
 Again, necessity was proved in Lemmas~\ref{lem:transient necessary}
and~\ref{lem:ancestry necessary}. Sufficiency follows from combining the proofs of Theorem~\ref{thm:necessary and sufficient}
and Prop.~\ref{prop:existence free operations}. Indeed,
under the hypotheses of the theorem, following the proof of Theorem~\ref{thm:necessary and sufficient}
we can map the basin of attraction of~$s$ to the basin of attraction
of~$s'$ in a covariant way. Regarding the other basins of attraction
in $\left(S,\phi\right)$, following the proof of Prop.~\ref{prop:existence free operations},
we can covariantly map every basin of attraction of $\left(S,\phi\right)$
to a basin of attraction of $\left(S',\phi'\right)$.
\end{proof}

 Notice that in Theorem~\ref{thm:necessary and sufficient general} we have one additional condition than in Theorem~\ref{thm:necessary and sufficient}: the existence of covariant maps from  from $\left(S,\phi\right)$ to $\left(S',\phi'\right)$. This condition is not present in Theorem~\ref{thm:necessary and sufficient}. Indeed, if $\left(S,\phi\right)=\left(S',\phi'\right)$, the existence of covariant maps is always guaranteed. To understand why, note that,
if $\left(S,\phi\right)=\left(S',\phi'\right)$, then,
for every attractor of length $\ell$,
we can always find an attractor
of length $\ell'\mid\ell$: it is enough to take the same attractor.
By Prop.~\ref{prop:existence free operations}, this is enough
to guarantee the existence of covariant maps. This is why Theorem~\ref{thm:necessary and sufficient}
does not mention this check at all.

\section{Linear programming}
\label{app:linear programming}

In this appendix, we show that, in the presence of randomness and stochastic influences, we can recast the conversion problem defined in~\S\ref{subsec:nonattractorness}. Here, a square matrix~$\Phi$, representing the dynamics, is given by the specification of the dynamical system, and we want to know if, for a pair of probability vectors $\bm{p}$ and $\bm{q}$, there exists a stochastic matrix $F$ such that $F\bm{p}=\bm{q}$ and $\left[F,\Phi\right]=0$.

Let~$\Phi$ be the dynamical matrix of a dynamical system $S$ with~$M$ states (see~\S\ref{subsec:DDS}).
As such, $\Phi$ can be expressed in terms of the generator of the dynamics $\phi:S\to S$ as
\begin{equation}
\Phi=\left(\bm{e}_{\phi\left(1\right)}\;\;\bm{e}_{\phi\left(2\right)}\;\;\cdots\;\;\bm{e}_{\phi\left(M\right)}\right).
\end{equation}
Let 
\begin{equation}
F=\sum_{i=1}^{M}\sum_{j=1}^{M}f_{ij}\bm{e}_i\bm{e}_j^T
\end{equation}
be an $M\times M$ column-stochastic matrix, and $T$ denote the transpose. Observe that the $j$th column of $F\Phi$ is given by $F\bm{e}_{\phi(j)}
$. On the other hand, the $j$th column of $\Phi F$ is given by $\sum_{k=1}^Mf_{kj}\bm{e}_{\phi\left(k\right)}$. Note that the $i$th component of $\sum_{k=1}^nf_{kj}\bm{e}_{\phi\left(k\right)}$ is given by 
$
\sum_{k\in\phi^{-1}(i)}f_{kj}
$, where $\phi^{-1}\left(i\right)$ denotes the pre-image of $s_i$. 

Hence, the condition $F\Phi=\Phi F$ is equivalent to
\begin{equation}
f_{i\phi(j)}=\sum_{k\in\phi^{-1}\left(i\right)}f_{kj},
\end{equation}
for every $j,k\in\left\{1,\dots,M\right\}$. This condition in turn can be expressed in terms of Hilbert-Schmidt inner products 
\begin{equation}
\mathrm{tr}\left[F G_{ij}^T\right]=0
\end{equation}
for every $i,j\in\left\{1,\dots,M\right\}$, where
\begin{equation}
G_{ij}=\bm{e}_i\bm{e}_{\phi\left(j\right)}^T-\sum_{k\in\phi^{-1}\left(i\right)}\bm{e}_k\bm{e}_j^T.
\end{equation}

The condition that the columns of $F$ sums to one can be expressed as 
\begin{equation}\label{eq:Fu=u}
F\bm{u}=\bm{u}, \qquad \bm{u}:=\begin{pmatrix}
1 \\ \vdots \\ 1 \end{pmatrix}.
\end{equation}
Define
\begin{equation}
    U_1:= \begin{pmatrix}
    1 & \cdots & 1 \\
    0 & \cdots & 0 \\
    \vdots &\ddots &\vdots 
    \end{pmatrix};
\end{equation}
the idea is to reshape the vector $\bm{u}$ into a matrix. For each $i>1$, define the matrix 
\begin{equation}
    U_i:=\begin{pmatrix}
    1 & \cdots & 1 \\
    0 & \cdots & 0 \\
    \vdots &\ddots &\vdots \\
    -1 & \cdots & -1 \\
    0 & \cdots & 0 \\
    \vdots &\ddots &\vdots \\
    \end{pmatrix},
\end{equation}
where the first row is $\bm{u}^T$, the $i$th row is $-\bm{u}^T$, and the rest of the rows are zeros. Then, we can express the condition~\eqref{eq:Fu=u} as 
\begin{equation}
\mathrm{tr}\left[FU_i^T\right]=\updelta_{i1}
\end{equation}
for all $i\in\left\{1,\dots,M\right\}$.

Let $\bm{p},\bm{q}\in\mathbf{P}^M\left(\mathbb{R}\right)$ be $M$-dimensional probability vectors of which we want to study the conversion under covariant influences. For each $i\in\left\{1,\dots,M\right\}$, define the matrix \begin{equation}
H_i:=\begin{pmatrix}
0 & \cdots &0 \\
\vdots & \ddots & \vdots \\
p_1-q_i & \cdots & p_M-q_i \\
\vdots & \ddots & \vdots \\
0 & \cdots &0 
\end{pmatrix},
\end{equation}
whose $i$th row is $\bm{p}^T-q_i\bm{u}^T$ and the rest of the rows are zeros. Then, we can express $F\bm{p}=\bm{q}$ as 
\begin{equation}
\mathrm{tr}\left[FH_i^T\right]=0
\end{equation}
for all $i\in\left\{1,\dots,M\right\}$.

We therefore arrive at the following SDP decision problem. Determine if there exists $F\geq 0$ (entry-wise) such that for all $i,j\in\left\{1,\dots,M\right\}$
\begin{enumerate}
\item $\mathrm{tr}\left[FU_i^T\right]=\updelta_{i1}$;
\item $\mathrm{tr}\left[F G_{ij}^T\right]=0$;
\item $\mathrm{tr}\left[FH_i^T\right]=0$.
\end{enumerate}

Note that,
if we remove the first condition $\mathrm{tr}\left[FU_i^T\right]=1$, we just need to assume in addition that $F$ is a non-zero matrix whose entries are non-negative. Therefore, there exists such an $F$ in the orthogonal complement $\mathcal{W}^\perp$ to the subspace
\begin{equation}
\mathcal{W}=\mathrm{span}_{\mathbb{R}}\left\{U_2,...,U_M,G_{11},...,G_{MM},H_1,....,H_M\right\},
\end{equation}
with respect to the Hilbert-Schmidt inner product.

Alternatively, let $\bm{f}\in\mathbb{R}^{M^2}$ be the vector obtained by reshaping  $F$. Similarly, let $\bm{g}_{ij}$, $\bm{u}_i$, and  $\bm{h}_j$ be the vectors obtained by reshaping $G_{ij}$, $U_i$, and $H_i$, respectively. With these notations,
the inner products above take the form of the dot products
\begin{enumerate}
\item $\bm{f}\cdot\bm{u}_i=\updelta_{i1}$;
\item $\bm{f}\cdot\bm{g}_{ij}=0$;
\item $\bm{f}\cdot\bm{h}_i=0$.
\end{enumerate}
Hence, the above conditions are equivalent to the existence of a non-trivial and non-negative solution to the homogeneous linear system of equations
\begin{equation}
A\bm{f}=0
\end{equation}
where 
\begin{equation}
A=\left(\bm{u}_2,...,\bm{u}_M,\bm{g}_{11},...,\bm{g}_{MM},\bm{h}_1,...,\bm{h}_M\right)^T
\end{equation}
is an $\left(M^2+2M-1\right)\times M^2$ matrix.

\bibliography{bibliography}

\end{document}